\newtheorem{myDef}{Definition}
\newtheorem{myTheo}{Theorem}
\newtheorem{lemma}{Lemma}
\renewcommand{\@fnsymbol}[1]{\ensuremath{\dag}}
\title{High-order Interactions Modeling for Interpretable Multi-Agent Q-Learning}
\author{%
  Qinyu~Xu \\
 School of Management and Engineering\\
 Nanjing University\\
 \texttt{\small qinyuxu@smail.nju.edu.cn}
   \And
   Yuanyang~Zhu\thanks{Corresponding author: Yuanyang Zhu and Xuefei Wu.}\\
   School of Information Management\\
   Nanjing University\\
   \texttt{\small yuanyangzhu@nju.edu.cn} \\
   \And
   Xuefei~Wu$^\dag$\\
   School of Management and Engineering\\
   Nanjing University\\
   \texttt{\small xuefeiwu@smail.nju.edu.cn} \\
   \And
   Chunlin~Chen \\
   School of Robotics and Automation\\
   Nanjing University\\
   \texttt{\small clchen@nju.edu.cn} \\
}
\begin{document}
\maketitle
\begin{abstract}
The ability to model interactions among agents is crucial for effective coordination and understanding their cooperation mechanisms in multi-agent reinforcement learning (MARL). 
However, previous efforts to model high-order interactions have been primarily hindered by the combinatorial explosion or the opaque nature of their black-box network structures.
In this paper, we propose a novel value decomposition framework, called Continued Fraction Q-Learning (QCoFr), which can flexibly capture arbitrary-order agent interactions with only linear complexity $\mathcal{O}\left({n}\right)$ in the number of agents, thus avoiding the combinatorial explosion when modeling rich cooperation. 
Furthermore, we introduce the variational information bottleneck to extract latent information for estimating credits.
This latent information helps agents filter out noisy interactions, thereby significantly enhancing both cooperation and interpretability.
Extensive experiments demonstrate that QCoFr not only consistently achieves better performance but also provides interpretability that aligns with our theoretical analysis.
\end{abstract}

\section{Introduction}\label{Introduction}
Cooperative multi-agent reinforcement learning (MARL) has recently drawn much attention~\cite{oroojlooy2023review}, such as autonomous vehicles~\cite{han2022stable, zhang2023spatial, dinneweth2022multi}, robotics~\cite{huang2020multi, wang2024multi}, and autonomous warehouses~\cite{zhou2021multi, krnjaic2024scalable}.
Success in these domains generally depends on a tapestry of local interactions: agents influence one another not only through explicit pairwise effects but also through higher-order dependencies that shape emergent behaviour \cite{yang2018mean, bloembergen2015evolutionary}, which is a key factor to make coordination efficient.
How to model those interactions among agents and to do so in a way humans can understand remains a challenge.

Value decomposition MARL has emerged as a powerful framework with the large capacity of deep neural networks and the centralized training and decentralized execution (CTDE) paradigm, which factorises a team return into individual utilities.
Leading value decomposition models—such as the monotonic mixer in QMIX~\cite{rashid2018qmix}—capture interaction structure only implicitly, leaving their decisions opaque.
Post-hoc explainers (e.g., Shapley value attribution~\cite{wang2020shapley} and masked-attention visualisation~\cite{goto2022solving}) offer limited insight and could not recover the underlying temporal or relational dynamics.  
Another inherent interpretable technique, like decision tree~\cite{milani2022maviper,liu2025mixrts,vos2024optimizing} or Shapley-based~\cite{wang2020shapley}, handles only low-order interactions; higher-order terms still explode combinatorially.
Here, we argue that modeling high-order collaboration patterns among agents is still crucial for promoting coordination, which is still missing.
Continued fractions~\cite{bender1994continued, puri2021cofrnets} provide a natural remedy: their recursive form captures interactions of any order, and a simple truncation yields a finite and interpretable approximation without combinatorial blow-up.  

Building on these ideas, we propose Continued Fraction Q-Learning (QCoFr), a novel framework that combines expressive value decomposition with intrinsic interpretability.
QCoFr represents the joint action-value function as a weighted sum of continued fraction modules—recursive structures that explicitly approximate arbitrary-order interactions among agents with linear complexity $\mathcal{O}(n)$ in the number of agents.
To strengthen credit assignment, QCoFr incorporates a variational information bottleneck (VIB)~\cite{alemi2017deep} to capture task-relevant latent information to bridge local histories to the optimal joint action $\boldsymbol{u}$.
The assistive information will be used to estimate the credits together with global state $\boldsymbol{s}$ for assisted value factorization, which relieves the spurious correlation between state $\boldsymbol{s}$ and $Q_{tot}$.
We also give a rigorous proof that QCoFr takes a linear combination of continued fractions and has the property of universal approximation, even each with finite depth and linear layers.
We demonstrate through extensive experiments on LBF, SMAC, and SMACv2 benchmarks that QCoFr not only generally achieves better performances on solving all tasks but also provides interpretability lacking in the state-of-the-art baselines.

\section{Background}\label{Background}
\textbf{Dec-POMDP.} 
A fully cooperative multi-agent task can be modeled as a Decentralized Partially Observable Markov Decision Process~(Dec-POMDP)~\cite{oliehoek2016concise}.
A Dec-POMDP can be defined as a tuple $\left< N, S, U, P, r, Z, O, n, \gamma \right>$, where $N$ denotes a set of $n$ agents and $s \in S$ is the global state of the environment. 
At each time step, agent $i$ selects an action $u_i \in U$, forming a joint action $\boldsymbol{u}_t \in \mathbf{U} \equiv U^n$. 
This results in a the next state $s'$ according to the state transition function $P\left(s^{\prime} \mid s, \boldsymbol{u}\right): S \times \mathbf{U} \times S  \rightarrow [0,1]$.
All agents obtain the same joint reward $r(s, \boldsymbol{u}): S \times U^n \rightarrow \mathbb{R}$ and $\gamma \in (0,1]$ is the discount factor.
Due to partial observability, each agent $i \in N$ receives individual observation $z_i \in Z$ from observation function $o_i \in O(s, i)$. 
Each agent maintains an action-observation history $\tau_i \in T$, conditioned by its policy $\pi_i$.
The overall objective is to find an optimal joint policy $\boldsymbol{\pi}=\left\langle\pi_{1}, \ldots, \pi_{n}\right\rangle$ to maximize the joint value function $Q^{\boldsymbol{\pi}} \left(s_t, \boldsymbol{u}_t\right)=\mathbb{E}\left[R_t \mid s_t, \boldsymbol{u}_t\right]$.

\textbf{Value Decomposition.}
Value decomposition has emerged as a dominant paradigm in multi-agent reinforcement learning (MARL) under the centralized training and decentralized execution (CTDE) paradigm. It seeks to approximate the joint action-value function $Q_{tot}$ by decomposing it into individual utility functions $Q_i$, where each utility depends solely on the agent's local trajectory $\tau_i$. To ensure that decentralized action selection remains consistent with the globally optimal joint action, the decomposition must satisfy the Individual-Global-Max (IGM) principle~\cite{rashid2018qmix}:
\begin{equation}
\begin{array}{r}
	\arg \max _{\boldsymbol{u}} Q_{tot}(\boldsymbol{\tau}, \boldsymbol{u}) = \left\{\arg \max _{u_1} Q_1\left(\tau_1, u_1\right), \cdots, \arg \max _{u_n} Q_n\left(\tau_n, u_n\right)\right\}.
\end{array}
\end{equation}
Among representative methods, VDN~\cite{sunehag2017value} implements a simple additive decomposition, treating all agents' contributions equally by summing their utilities. To capture more complex coordination while preserving IGM, QMIX~\cite{rashid2018qmix} introduces a mixing network that combines utilities into the joint value through a monotonic function. During training, the mixing network and agent networks are jointly optimized by minimizing the temporal-difference (TD) loss of $Q_{tot}$, while during execution, agents act independently using their local policies derived from the learned utility functions.
The introduction of representative related works for the above formulation can be referred to in Appendix~\ref{rw}.
\section{High-order Interactions Modeling for Decomposition}\label{CFN}
In a large multi-agent task, agents are usually decomposed into several coalitions, each consisting of multiple agents that cooperate to accomplish the common goal.
Besides, each agent could belong to different coalitions at different time steps. 
This coalition organization is general and can characterize most coordination patterns among agents. 
Thus, it is essential to model complex interactions among agents and estimate their credits for understanding their coordination patterns.
In this section, we provide an informal comparative analysis of using the continued fraction network (CFN) architecture against widely adopted value decomposition methods: VDN~\cite{sunehag2017value}, QMIX~\cite{rashid2018qmix}, and NA$^2$Q~\cite{liu2023n}.

From the view of value factorization, VDN provides a simple yet highly interpretable approach by representing the joint Q-value simply as the sum of individual agent Q-values, $Q_{tot}(\boldsymbol{\tau}, \boldsymbol{u})=\sum_{i \in N} Q_i\left(\tau_i, u_i\right)$.
VDN inherently models only first-order interactions, which could limit its ability to capture richer and group-level coordination.
QMIX enriches the functional class of factorisation than that of VDN by introducing a state-conditioned monotonic mixer $Q_{tot}(s,\boldsymbol{\tau}, \boldsymbol{u})=f_{\mathbf{QMIX}}\left(\boldsymbol{s},[Q_1\left(\tau_1, u_1\right), \ldots, Q_{n}\left(\tau_n, u_n\right)]\right)$, where $f_{\mathbf{QMIX}}$ is constrained to be monotonic in each argument.
This implicit higher-order modeling increases representational capacity, but at the expense of interpretability—$f_{\mathbf{QMIX}}$ behaves as a black box, obscuring which interactions drive performance.
NA$^2$Q tries to expand the joint value via a Taylor-like decomposition across all agents up to order $n$:
\begin{equation}\label{Eq2}
\begin{array}{r}
Q_{tot}=f_0+\sum_{i=1}^n \alpha_i \underbrace{f_i\left(Q_i\right)}_{\text {order-1 }}+\cdots+\sum_{k \in \mathcal{D}_h} \alpha_k \underbrace{f_k\left(Q_k\right)}_{\text {order-$h$}}  +\cdots+\alpha_{1 \ldots n} \underbrace{f_{1 \ldots n}\left(Q_1, \ldots, Q_n\right)}_{\text {order-$n$}},
\end{array}
\end{equation}
where $f_k \in \left\{f_1, \cdots, f_{1 \ldots n}\right\}^m$ is modeled with neural additive model~\cite{agarwal2021neural} and $\mathcal{D}_h$ is the set of all non-empty subsets of $h \in \{1,...,n\}$ with order-$h$ interactions.
While this yields interpretability, low-order interaction terms (e.g., $2$-order), the number of terms grows exponentially, $\mathcal{O}(2^n)$, making full expansion impractical for large teams.

\begin{wrapfigure}{r}{0.3\textwidth} 
    \setlength{\abovecaptionskip}{-4mm}
    \centering
    \vspace{-20pt} 
    \includegraphics[width=0.3\textwidth]{./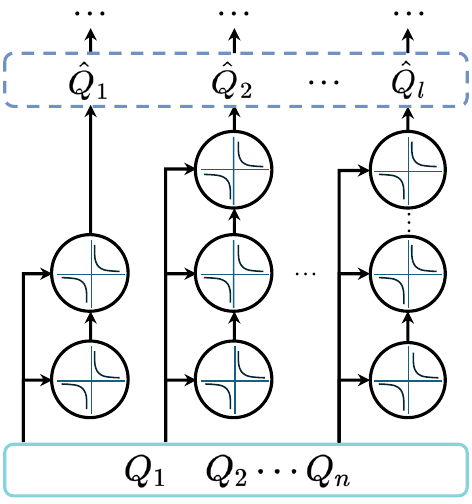}
    \caption{A diagram of CFN composed of different depth ladders that take individual values $\boldsymbol{Q}$ as inputs and output temporal values $\hat{Q}$ with $d$-order interactions.}
    \vspace{-1pt}  
    \label{fig1}
\end{wrapfigure}
To overcome the limitations of existing value decomposition methods discussed above, we enrich with the value factorization via a continued fraction neural network (CFN) inspired by CoFrNets~\cite{puri2021cofrnets}, whose recursive structure and linear composition form a function class with universal approximation capability~\cite{hornik1989multilayer}.
Following the terminology of CoFrNets, we call the function depicted in \textit{Fig.~\ref{fig1}} a ``ladder'', as its pictorial representation resembles a rail-and-rung structure that sequentially propagates inputs through the nodes. 
The joint action-value function can be established as a weighted summation of individual values:
\begin{equation}\label{Eq3}
\begin{array}{r}
Q_{tot}=\sum_{k=1}^l \alpha_k \cdot \tilde{f_k}\left(\boldsymbol{Q}\right),
\end{array}
\end{equation}
where $\tilde{f_k}$ denotes a continued fraction structure $\frac{1}{\boldsymbol{w_{k_1}} \boldsymbol{Q}+} \frac{1}{\boldsymbol{w_{k_2}} \boldsymbol{Q}+\cdots}$ with linear functions $\boldsymbol{w_k}$ of the input $\boldsymbol{Q}=\left\{Q_1, \ldots, Q_n\right\}$ given $l$ ladders, and $\alpha_k$ are learnable weights used to estimate the contribution of individual agents and coalitions of agents.
With a power series expansion, such a recursive structure can be rewritten as
\begin{equation}\label{Eq4}
\begin{array}{r}
\frac{1}{\boldsymbol{w_1} \boldsymbol{Q}+} \frac{1}{\boldsymbol{w_2} \boldsymbol{Q}+\cdots}=\sum_{p_1, \ldots, p_n=0}^{\infty} c_{p_1, \ldots, p_n} \prod_{i=1}^n Q_i^{p_i},
\end{array}
\end{equation}
where the coefficients $c_{p_1, \ldots, p_n}$ and the weight parameters $\boldsymbol{w_k}$ are in one-to-one correspondence~\cite{bender1994continued, puri2021cofrnets}, and each $c_{p_1, \ldots, p_n}$ used to derive interactions among agents.

CFN generally approximates interactions up to a finite depth $d$, which constitutes a rational approximation $R_d(\boldsymbol{Q})$ of the joint utility function up to $d$-order interactions among agents.
It shows that truncating the continued fraction provides a practical rational approximation that explicitly models agent interactions up to order $d$:
\begin{equation}\label{Eq6}
\begin{array}{r}
\hat{Q}_k=\frac{1}{\boldsymbol{w_1} \boldsymbol{Q}+} \frac{1}{\boldsymbol{w_2} \boldsymbol{Q}+\cdots} \frac{1}{\boldsymbol{w_d} \boldsymbol{Q}}=\sum_{p_1, \ldots, p_n=0}^{d} c_{p_1, \ldots, p_n} \prod_{i=1}^n Q_i^{p_i}.
\end{array}
\end{equation} 

\begin{myDef} [Padé Approximant~\cite{baker1961pade, lorentzen2010pade}]\label{Padé Approximant} Let $ C(z) = \sum_{k=0}^{\infty} c_k z^k $ be a formal power series in the variable $z$, then the Padé approximant of order $[L/M]$ is a rational function of the form:
\begin{equation}
\begin{array}{r}
        R_{L,M}(z)=[A_{L,M}(z)] / [B_{L,M}(z)],
\end{array}
\end{equation}
 where $A_{L,M}(z)$ and $B_{L,M}(z)$ are polynomials of degrees at most $L$ and $M$, respectively, chosen such that
\begin{equation}
\begin{array}{r}
        B_{L,M}(z) C(z) - A_{L,M}(z) = \mathcal{O}(z^{L+M+1}),
\end{array}
\end{equation}
        where notation $\mathcal{O}(z^{k})$ denotes some power series of the form $\sum_{n=k}^{\infty} \tilde{c}_n z^n$.
    This approximation minimizes the difference between the rational function and the power series up to the order $L+M$.
\end{myDef}
    
We give a formal proof that such truncation naturally satisfies the Padé approximation condition as defined in \textit{Definition~\ref{Padé Approximant}}, i.e., $f(\boldsymbol{Q}) - R_d(\boldsymbol{Q}) = \mathcal{O}(\boldsymbol{Q}^{d + 1}) $, where $f(\boldsymbol{Q})$ denotes the complete representation of agent interactions based on their individual value functions. 
This implies that the depth-$d$ truncation of the continued fraction yields an exact approximation of $f(\boldsymbol{Q})$ up to the $d$-th order of $\boldsymbol{Q}$. 
The detailed proof is provided in Appendix~\ref{depth}. 

Moreover, we can easily control the capacity of the interaction order among agents with the acceptable complexity $\mathcal{O}(n)$ by adjusting the depth of CFN.
Since linear combinations of CF modules constitute a function class with universal approximation power~\cite{hornik1989multilayer}, they can approximate any continuous mapping between finite-dimensional spaces.
Its recursive and interpretable structure reveals the contributions of individual agents and coalitions of agents, enabling precise credit assignment while faithfully modeling high-order interactions among agents.
We also study two CFN variants, CFN-C and CFN-D (see \textit{Fig.~\ref{fig11}} and Appendix~\ref{CFN}).
CFN-C combines single-feature and full-input CF ladders, trading higher expressivity for additional computational cost.
CFN-D retains only single-feature ladders, yielding a purely additive mixer that cannot represent higher-order interactions.
Empirical comparisons are provided in Section~\ref{Ablation Studies}.

\begin{figure}[tb]
	\setlength{\abovecaptionskip}{-3mm}
	\setlength{\belowcaptionskip}{-5mm}
	\includegraphics[width=\textwidth]{./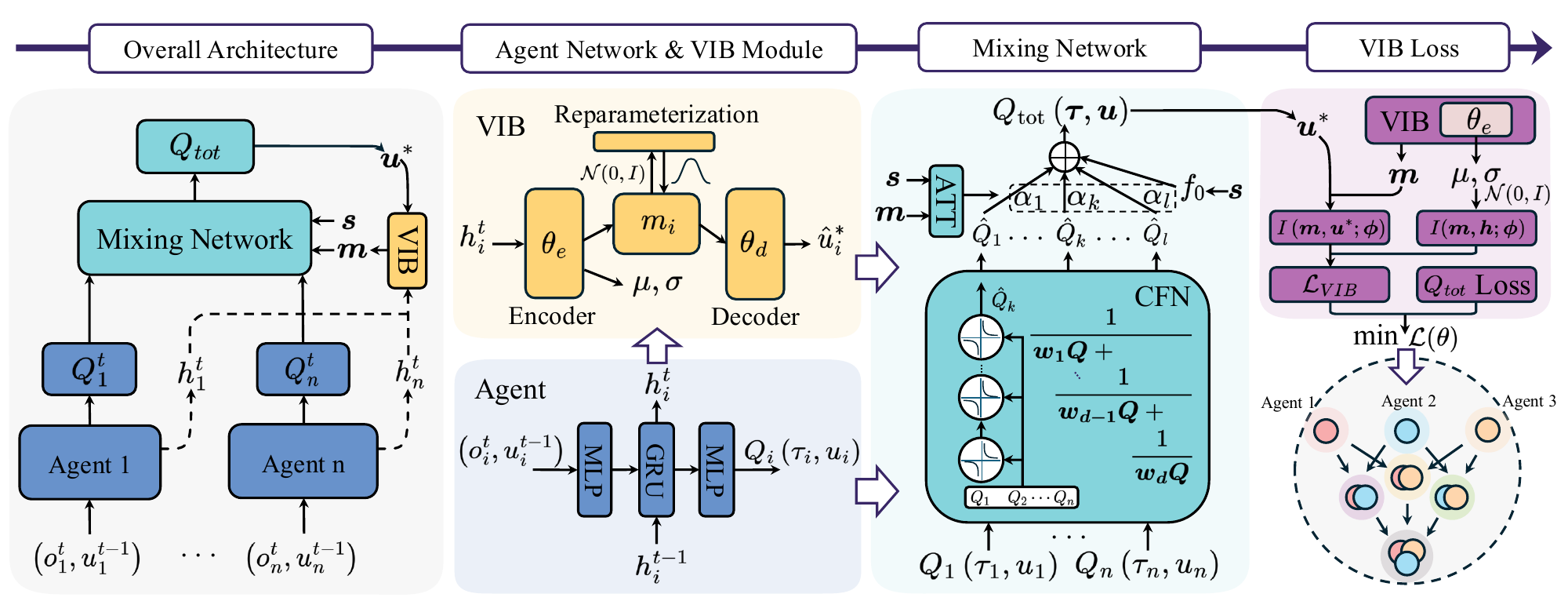}
	\caption{The overall architecture of QCoFr. A CFN-based mixing network models high-order interactions among agents by expressing $Q_{\text{tot}}$ as a linear combination of $l$ ladders over individual values $\boldsymbol{Q}$.
	The VIB module encodes the hidden state $\boldsymbol{h}$ into assistive information $\boldsymbol{m}$, which is used to deduce the credits of each agent of coalitions together with the global state $\boldsymbol{s}$.}
	\label{fig2}
\end{figure}

\section{Methods}\label{Methods}
We next present an interpretable value decomposition framework that explicitly models arbitrary-order agent interactions without incurring a combinatorial explosion.
The overall architecture of QCoFr is illustrated in \textit{Fig.~\ref{fig2}}, which consists of three components as follows: 
(1) an individual action-value function for each agent,
(2) an assistive information generation module that encodes the hidden state $\boldsymbol{h}$ via a variational information bottleneck (VIB)~\cite{alemi2017deep} to produce the assistive information $\boldsymbol{m}$ to promote the credit assignment,
and (3) a joint action-value function $Q_{tot}$ that composes individual action-value functions into a joint action-value function using a CFN architecture, where the assistive information $\boldsymbol{m}$ and the global state $\boldsymbol{s}$ are fed into the mixing network to estimate credits. 
During the centralized training, the whole network is learned in an end-to-end fashion to minimize the sum of TD and VIB loss, and each agent selects actions using its Q-function based on local action-observation history in a decentralized fashion.

\textbf{Individual Action-value Function} is computed by a recurrent Q-network with gated recurrent unit (GRU) for each agent $i$, which takes the local observation $o_i^{t}$, historical information $h_i^{t-1}$, and previous action $u_i^{t-1}$ as inputs and outputs local $Q_i(\tau_i,u_i)$.

\textbf{Assistive Information Generation Module.}
Previous works usually use all available information for estimating the credit assignment $\alpha_k$, which could be quite inefficient~\cite{wang2020action}.
The global state $\boldsymbol{s}$ is an unobserved confounder as the common cause factor of the global state and the joint value function~\cite{li2022deconfounded,liu2023n}.
Thus, it will be necessary to introduce additive information to cut off the confounder.
To this end, we generate an assistive latent representation $\boldsymbol{m}$ from the hidden state $\boldsymbol{h}$ of agents to predict the optimal action selection $\boldsymbol{u}^*$, which forms the Markov dependency $\boldsymbol{h} \rightarrow \boldsymbol{m} \rightarrow \boldsymbol{u}^*$.
Here, we utilize VIB to capture the task-relevant information that can provide a more accurate estimation of the optimal joint action selection from the hidden state $\boldsymbol{h}$, which can promote the credit assignment.
The goal is to learn an encoding that contains as little redundant information as possible while providing maximal information about the prediction $\boldsymbol{u}^*$ that affects environmental information or its private properties, which can be written as
\begin{equation}
\label{Eq8}
\begin{array}{r}
J_{IB}(\boldsymbol{\phi})=I(\boldsymbol{m}, \boldsymbol{u^*} ; \boldsymbol{\phi})-\beta I(\boldsymbol{m}, \boldsymbol{h} ; \boldsymbol{\phi}) ,
\end{array}
\end{equation}
where the Lagrange multiplier $\beta \geq 0$ controls the tradeoff. Here, we adopt variational approximations to simplify the computation of the mutual information $I$ in Eq.~\ref{Eq8}.
The following theorems provide tractable bounds for each term.

\begin{myTheo}[Lower Bound for $I(\boldsymbol{m}, \boldsymbol{u^*}; \boldsymbol{\phi})$]
Let the representation $m_i$ be reparameterized as a random variable drawn from a multivariate Gaussian distribution $m_i \sim \mathcal{N}(f_m(h_i; \phi_m), \boldsymbol{I})$,
where $f_m$ is an encoder parameterized by $\phi_m$, $h_i$ denotes the hidden state of agent $i$, and $\boldsymbol{I}$ is the identity covariance matrix.
Then, the mutual information between the assistive information $\boldsymbol{m}$ and the optimal joint action $\boldsymbol{u}^*$ is lower-bounded as:
\begin{equation}\label{Eq9}
\begin{array}{r}
I(\boldsymbol{m}, \boldsymbol{u^*}; \boldsymbol{\phi}) \geq \frac{1}{N} \sum_{i=1}^N \mathbb{E}_{\epsilon \sim p(\epsilon)} \left[-\log q(u^*_i \mid f(h_i, \epsilon))\right],
\end{array}
\end{equation}
where $q(u^*_i \mid m_i)$ is a variational distribution approximating the true posterior $p(u^*_i \mid m_i)$, and $m_i=f(h_i, \epsilon)$ denotes a deterministic function of $h_i$ and the Gaussian random variable $\epsilon$.
\end{myTheo}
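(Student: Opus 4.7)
The plan is a three-ingredient variational derivation adapted from Alemi et al.~(2017) to the multi-agent CTDE setting. First I would use the entropy form $I(\boldsymbol{m},\boldsymbol{u}^*;\boldsymbol{\phi})=H(\boldsymbol{u}^*)-H(\boldsymbol{u}^*\mid\boldsymbol{m})$, noting that $H(\boldsymbol{u}^*)$ is a non-negative constant independent of $\boldsymbol{\phi}$. Since the true posterior $p(\boldsymbol{u}^*\mid\boldsymbol{m})$ is intractable, I would upper-bound $H(\boldsymbol{u}^*\mid\boldsymbol{m})$ by the tractable cross-entropy $\mathbb{E}_{p(\boldsymbol{m},\boldsymbol{u}^*)}\!\bigl[-\log q(\boldsymbol{u}^*\mid\boldsymbol{m})\bigr]$ using the non-negativity of $D_{\mathrm{KL}}\!\bigl(p(\boldsymbol{u}^*\mid\boldsymbol{m})\,\|\,q(\boldsymbol{u}^*\mid\boldsymbol{m})\bigr)$ (Gibbs' inequality).

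Second, I would impose the mean-field decoder factorisation $q(\boldsymbol{u}^*\mid\boldsymbol{m})=\prod_{i=1}^{N} q(u^*_i\mid m_i)$, which is natural under CTDE because each $m_i$ is constructed as the per-agent sufficient statistic for $u^*_i$; the joint cross-entropy then splits into $\sum_i \mathbb{E}\!\bigl[-\log q(u^*_i\mid m_i)\bigr]$, and normalising by $1/N$ recovers the per-agent average appearing in the statement. Third, the Gaussian encoder $m_i\sim\mathcal{N}(f_m(h_i;\phi_m),\boldsymbol{I})$ admits the reparameterisation $m_i=f(h_i,\epsilon)$ with $\epsilon\sim\mathcal{N}(0,\boldsymbol{I})$, so the outer expectation over $p(m_i\mid h_i)$ rewrites as one over $p(\epsilon)$, producing the kernel $-\log q(u^*_i\mid f(h_i,\epsilon))$ that appears on the RHS.

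The delicate point is the bookkeeping of the $H(\boldsymbol{u}^*)$ term and the direction of the inequality: chaining the three ingredients cleanly gives the canonical variational lower bound $I(\boldsymbol{m},\boldsymbol{u}^*)\ge H(\boldsymbol{u}^*)+\mathbb{E}[\log q(\boldsymbol{u}^*\mid\boldsymbol{m})]$, i.e.\ the per-agent NLL enters the RHS with a subtracted $H(\boldsymbol{u}^*)$. Arriving at the stated form $I\ge \tfrac{1}{N}\sum_i \mathbb{E}_\epsilon\!\bigl[-\log q(u^*_i\mid f(h_i,\epsilon))\bigr]$ therefore requires absorbing the non-negative constant $H(\boldsymbol{u}^*)$ under the VIB convention, in which the per-agent NLL on the RHS is treated as the variational proxy whose minimisation tightens the true lower bound on $I(\boldsymbol{m},\boldsymbol{u}^*;\boldsymbol{\phi})$. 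I would make this absorption explicit in the writeup so that the sign and form agree with the statement; once the convention is pinned down, the Gibbs, factorisation, and reparameterisation steps themselves are routine.
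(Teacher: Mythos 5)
Your proposal is correct and follows essentially the same route as the paper's proof: introduce the variational decoder $q(u^*_i\mid m_i)$, invoke non-negativity of $\mathrm{KL}\bigl(p(u^*_i\mid m_i)\,\|\,q(u^*_i\mid m_i)\bigr)$ to pass to the cross-entropy, discard $H(\boldsymbol{u}^*)$ as optimization-independent, and apply the reparameterization $p(m_i\mid h_i)\,dm_i=p(\epsilon)\,d\epsilon$. The ``delicate point'' you flag is real and you handle it more carefully than the paper does: the paper's own derivation ends at $\tfrac{1}{N}\sum_i\mathbb{E}\left[\log q(u^*_i\mid m_i)\right]+H(u^*_i)$ and then silently flips the sign of the $\log q$ term when dropping the entropy, whereas the literal consequence of dropping $H\ge 0$ is $I\ge\mathbb{E}\left[\log q\right]=-\mathbb{E}\left[-\log q\right]$; your explicit acknowledgment that the stated form holds only under the VIB loss-writing convention is the right way to present it.
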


\textbf{Proof sketch:} This bound follows from the variational representation of mutual information.
By introducing a variational approximation $q(u^*_i \mid m_i)$ and applying the reparameterization trick~\cite{kingma2013auto} to write $p(m_i \mid h_i)dm_i=p(\epsilon)d\epsilon$, the intractable posterior is replaced with a more manageable form, enabling efficient optimization via gradient-based methods.

\begin{myTheo}[Upper Bound for $I(\boldsymbol{m}, \boldsymbol{h}; \boldsymbol{\phi})$]
Let $\tilde{q}(\boldsymbol{m})$ denote a variational approximation of the marginal distribution $p(\boldsymbol{m})$.
Then, the mutual information between the representation $\boldsymbol{m}$ and the hidden state $\boldsymbol{h}$ admits the following upper bound:
\begin{equation}\label{Eq10}
I(\boldsymbol{m}, \boldsymbol{h}; \boldsymbol{\phi}) \leq \mathrm{KL}(p(\boldsymbol{m} \mid h_i) \,\|\, \tilde{q}(\boldsymbol{m})).
\end{equation}
\end{myTheo}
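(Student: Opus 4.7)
The plan is to derive the inequality by the standard variational upper bound argument used in information bottleneck derivations, combined with the non-negativity of the Kullback--Leibler divergence. First I would write the mutual information in its canonical form
\begin{equation}
I(\boldsymbol{m},\boldsymbol{h};\boldsymbol{\phi}) = \mathbb{E}_{p(\boldsymbol{h},\boldsymbol{m})}\!\left[\log \frac{p(\boldsymbol{m}\mid \boldsymbol{h})}{p(\boldsymbol{m})}\right] = \mathbb{E}_{p(\boldsymbol{h})}\bigl[\mathrm{KL}(p(\boldsymbol{m}\mid \boldsymbol{h}) \,\|\, p(\boldsymbol{m}))\bigr],
\end{equation}
so that the only obstacle to tractability is the intractable marginal $p(\boldsymbol{m}) = \int p(\boldsymbol{m}\mid \boldsymbol{h})\, p(\boldsymbol{h})\, d\boldsymbol{h}$, which depends on the (unknown) hidden-state distribution over trajectories.

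The second step is to introduce the variational approximation $\tilde{q}(\boldsymbol{m})$. By non-negativity of the KL divergence we have
\begin{equation}
\mathrm{KL}\bigl(p(\boldsymbol{m}) \,\|\, \tilde{q}(\boldsymbol{m})\bigr) \geq 0 \quad\Longleftrightarrow\quad \mathbb{E}_{p(\boldsymbol{m})}[\log p(\boldsymbol{m})] \geq \mathbb{E}_{p(\boldsymbol{m})}[\log \tilde{q}(\boldsymbol{m})].
\end{equation}
Substituting this into the mutual-information expression, the marginal $\log p(\boldsymbol{m})$ (which appears with a minus sign) can be upper-bounded by $\log \tilde{q}(\boldsymbol{m})$, yielding
\begin{equation}
I(\boldsymbol{m},\boldsymbol{h};\boldsymbol{\phi}) \leq \mathbb{E}_{p(\boldsymbol{h},\boldsymbol{m})}\!\left[\log \frac{p(\boldsymbol{m}\mid \boldsymbol{h})}{\tilde{q}(\boldsymbol{m})}\right] = \mathbb{E}_{p(\boldsymbol{h})}\bigl[\mathrm{KL}(p(\boldsymbol{m}\mid \boldsymbol{h}) \,\|\, \tilde{q}(\boldsymbol{m}))\bigr],
\end{equation}
which is precisely the stated bound (read, as the theorem writes it with a representative hidden state $h_i$, in the per-agent sense whose expectation over $p(\boldsymbol{h})$ gives the usable training signal).

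I expect the argument itself to be essentially routine; the main subtlety is pinning down the notation in the statement, namely that the right-hand side should be understood as expected (or sampled) KL over the hidden-state distribution rather than an unconditional quantity, since the left-hand side is an expectation while $\mathrm{KL}(p(\boldsymbol{m}\mid h_i)\,\|\,\tilde{q}(\boldsymbol{m}))$ as written is a function of $h_i$. I would note explicitly that the bound is tight exactly when $\tilde{q}(\boldsymbol{m}) = p(\boldsymbol{m})$, which motivates choosing a flexible prior family (e.g., a standard Gaussian matching the Gaussian encoder of Theorem~1) so that the surrogate objective combining Eq.~\ref{Eq9} and Eq.~\ref{Eq10} becomes an efficiently computable and differentiable lower bound of $J_{IB}(\boldsymbol{\phi})$ in Eq.~\ref{Eq8}.
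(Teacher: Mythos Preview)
Your proposal is correct and follows essentially the same route as the paper: write the mutual information as $\mathbb{E}_{p(\boldsymbol{h})}[\mathrm{KL}(p(\boldsymbol{m}\mid\boldsymbol{h})\,\|\,p(\boldsymbol{m}))]$, invoke $\mathrm{KL}(p(\boldsymbol{m})\,\|\,\tilde{q}(\boldsymbol{m}))\geq 0$ to replace $\log p(\boldsymbol{m})$ by $\log\tilde{q}(\boldsymbol{m})$, and arrive at $\mathbb{E}_{p(\boldsymbol{h})}[\mathrm{KL}(p(\boldsymbol{m}\mid\boldsymbol{h})\,\|\,\tilde{q}(\boldsymbol{m}))]$. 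Your observation that the right-hand side in the theorem statement should be read as an empirical/expected KL over hidden states is exactly how the paper resolves the notation, passing to a Monte Carlo average $\tfrac{1}{N}\sum_{i=1}^N$ before writing $\mathrm{KL}(p(\boldsymbol{m}\mid h_i)\,\|\,\tilde{q}(\boldsymbol{m}))$.
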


\textbf{Proof sketch:} This result exploits the non-negativity of the KL divergence.
By approximating the marginal distribution $p(\boldsymbol{m})$ with a variational approximation $\tilde{q}(\boldsymbol{m})$, the mutual information can be bounded from above via the definition of the KL divergence.

Combining the above conclusions, we can arrive at the following variational objective function:
\begin{equation}\label{Eq11}
\mathcal{L}_{VIB} = \frac{1}{N} \sum_{i=1}^N \mathbb{E}_{\epsilon \sim p(\epsilon)}\left[-\log q\left(u^*_i \mid f(h_i, \epsilon)\right)\right] + \beta \mathrm{KL}\left[p(\boldsymbol{m} \mid h_i), \tilde{q}(\boldsymbol{m})\right].
\end{equation}
The detailed derivation and rigorous proofs for these bounds are provided in Appendix~\ref{VIB}.

\textbf{Mixing Network.}
We introduce the CFN-based mixing network consisting of $l$ ladders (\textit{Fig.~\ref{fig2}}) with a maximum depth $d$, which are designed to model $d$-order interactions among agents.
Each ladder $k$ takes the local value functions $\boldsymbol{Q}$ as input and produces an output $\hat{Q}_k$ as defined in Eq.~\ref{Eq6}.
To satisfy the IGM principle~\cite{son2019qtran}, each CFN layer adopts a strictly non-negative activation function of the form $\frac{1}{\max(|z|, \delta)}$, where $z$ denotes the sum of the current weighted input and the reciprocal of the previous layer's output, and $\delta$ is a small positive constant to prevent poles caused by near-zero denominators.
Since the universal approximation theorem applies to any linear combination of continued fractions and does not rely on non-negative weights, an extension to non-IGM mixers is possible, with details discussed in the Appendix~\ref{IGM}.

Effective coordination hinges on precisely deducing the contributions of each individual agent or coalition to the overall success.
To this end, we leverage assistive information $\boldsymbol{m}$ together with the global state $\boldsymbol{s}$ to estimate the credits.
Specifically, the credit $\alpha_k$ for each ladder $k$ is computed as
\begin{equation}\label{Eq12}
\alpha_k=\frac{\exp \left(\left(\boldsymbol{w}_m \boldsymbol{m}\right)^{\top} \operatorname{ReLU}\left(\boldsymbol{w}_s \boldsymbol{s}\right)\right)}{\sum_{k=1}^l \exp \left(\left(\boldsymbol{w}_m \boldsymbol{m}\right)^{\top} \operatorname{ReLU}\left(\boldsymbol{w}_s \boldsymbol{s}\right)\right)},
\end{equation}
where the weights $\boldsymbol{w}_m$ and $\boldsymbol{w}_s$ are the learnable parameters, and ReLU is the activation function.
When the credits $\alpha_k$ are then used to aggregate the outputs $\hat{Q}_k$ of the ladders, the joint action-value function $Q_{tot}$ can be represented as
\begin{gather}\label{Eq13}
\begin{array}{r}
Q_{t o t}=\sum_{k=1}^l \hat{Q}_k=\sum_{k=1}^l \alpha_k \sum_{p_1, \ldots p_n=0}^{d} c_{p_1 \ldots, \ldots p_n} \prod_{i=1}^n Q_i^{p_i}=\sum_{p_1 \ldots p_n=0}^{d} c_{p_1 \ldots,\ldots p_n}^{\prime} \prod_{i=1}^n Q_i^{p_i},
\end{array}
\end{gather}
where $c_{p_1, \ldots, p_n}^{\prime}$ denotes the final coefficient of each interaction term.
For notational simplicity when discussing interaction patterns, we denote the coefficient of a specific agent interaction (e.g., agent $i$ and $j$) as $\beta_{ij}$, where the subscript indicates the agent indices involved.

\textbf{Overall Learning Objective.}
To sum up, we train QCoFr end-to-end with two terms of loss functions.
The first one is naturally the original mean-squared temporal-difference (TD) error, which enables each agent to learn its individual agent policy by optimizing the joint-action value of the mixing network module.
The last one is the VIB loss $\mathcal{L}_{VIB}$ that is encouraged to produce assistive information.
Thus, the overall loss function is formulated as follows:
\begin{equation}\label{Eq14}
\mathcal{L}(\theta)=\sum_i\left(Q_{tot}(s, \bm{\tau}, \boldsymbol{u})-y_i\right)^2+ \mathcal{L}_{VIB},
\end{equation}
where $y_i=r+\gamma \hat{Q}_{\text {tot }}\left(s^{\prime}, \bm{\tau}^{\prime}, \arg \max _{\boldsymbol{u}^{\prime} \in \mathcal{U}^n} Q_{\text {tot }}\left(s^{\prime}, \boldsymbol{\tau}^{\prime}, \boldsymbol{u}^{\prime}\right)\right)$ with $\theta$ denotes the parameters of the target network.
We summarize the full pseudo-code of QCoFr in Appendix~\ref{algorithm}.

\section{Experiments}\label{Experiments}
In this section, we conduct experiments to evaluate QCoFr on three challenging benchmarks over the Level Based foraging~(LBF) ~\cite{christianos2020shared}, the StarCraft Multi-Agent Challenge~(SMAC)~\cite{samvelyan2019starcraft} and SMACv2~\cite{ellis2023smacv2}.
The details of the environment can be found in Appendix~\ref{ExperimentalDetails}.
We compare our method against nine prominent baselines, including VDN~\cite{sunehag2017value}, QMIX~\cite{rashid2018qmix}, QPLEX~\cite{wang2020qplex}, Centrally-Weighted QMIX (CW-QMIX)~\cite{rashid2020weighted}, CDS~\cite{li2021celebrating}, SHAQ~\cite{wang2022shaq}, GoMARL~\cite{zang2023automatic}, ReBorn~\cite{qin2024dormant}, and NA$^2$Q~\cite{liu2023n}.
We carry out experiments with 5 random seeds, and all performance results are plotted using mean ± std.
Furthermore, we perform interpretability analyses of QCoFr to provide empirical evidence for the contributions of individual agents and coalitions of agents.

\subsection{Performance Comparison}	

\begin{wrapfigure}{r}{0.65\textwidth}
    \setlength{\abovecaptionskip}{-5mm}
    \setlength{\belowcaptionskip}{0mm}
    \centering
    \vspace{-12pt} 
    \includegraphics[width=0.65\textwidth]{./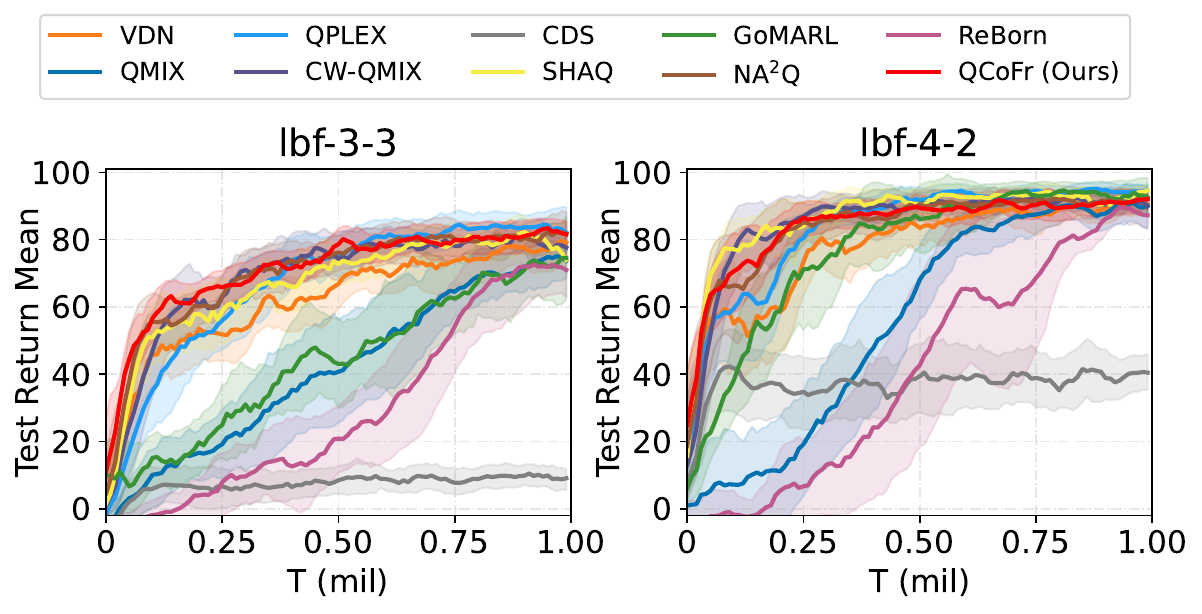}  
    \caption{Performance results on LBF benchmark.}
    \vspace{-10pt}  
    \label{fig4}
\end{wrapfigure}

\textbf{Performance on LBF.} 
We first evaluate QCoFr on two constructed LBF tasks: 3 agents with 3 food items~(\textit{lbf-3-3}) and 4 agents with 2 food items~(\textit{lbf-4-2}).
LBF is a grid-based multi-agent benchmark that supports configurable levels of cooperation and observability~\cite{papoudakis2021benchmarking}. 
Each agent navigates a 10 $\times$ 10 grid world and observes a 2 $\times$ 2 sub-grid centered around it, where a group of adjacent agents can successfully collect a food item if the sum of their levels is greater than or equal to the item's level.
As shown in \textit{Fig.~\ref{fig4}}, QCoFr achieves competitive performance compared to state-of-the-art methods.
However, the simplified task design inherently constrains the model's capacity to fully leverage higher-order interaction mechanisms, whose potential advantages may emerge in complex multi-agent coordination scenarios.
The failure of the CDS may be due to an overemphasis on behavioral diversity without considering effective coordination among agents.
QMIX has a slow convergence speed due to its inability to exploit latent information, which restricts its representational capacity. 
GoMARL and ReBorn also require more steps to discover effective agent groupings and allocate neuron weights.
While CW-QMIX, NA$^2$Q, and SHAQ demonstrate competitive performance by introducing additional modules to enhance value decomposition, their inability to model higher-order interactions or reliance on complex computation may hinder scalability in complex multi-agent coordination tasks.

\textbf{Performance on SMAC.}
We next compare QCoFr with baselines on the more challenging SMAC benchmark, where agents make decisions based on local observations while cooperating to defeat AI-controlled enemies.
We show the performance comparison on six different scenarios, including one easy map: \textit{2s3z}, three hard maps: \textit{2c\_vs\_64zg}, \textit{3s\_vs\_5z}, \textit{5m\_vs\_6m}, and two super-hard maps: \textit{3s5z\_vs\_3s6z}, \textit{6h\_vs\_8z}. 
As shown in \textit{Fig.~\ref{fig5}}, QCoFr achieves superior performance across almost all scenarios, especially on the super hard tasks.
CW-QMIX and QPLEX do not achieve satisfactory performance, which may be due to excessive approximation and the relaxed constraints introduced during training.
CDS exhibits slower convergence rates, which may be due to requiring more training steps to capture diverse individualized behaviors. 
Although GoMARL and NA$^2$Q can model high-order interactions to yield notable performance on complex scenarios by leveraging grouping and enumeration, GoMARL requires extended training durations to learn effective groupings, and NA$^2$Q considers low-order interaction terms to avoid combinatorial explosion.

\begin{figure}[tb]
	\setlength{\abovecaptionskip}{-3mm}
	\setlength{\belowcaptionskip}{-4mm}
	\includegraphics[width=1\textwidth]{./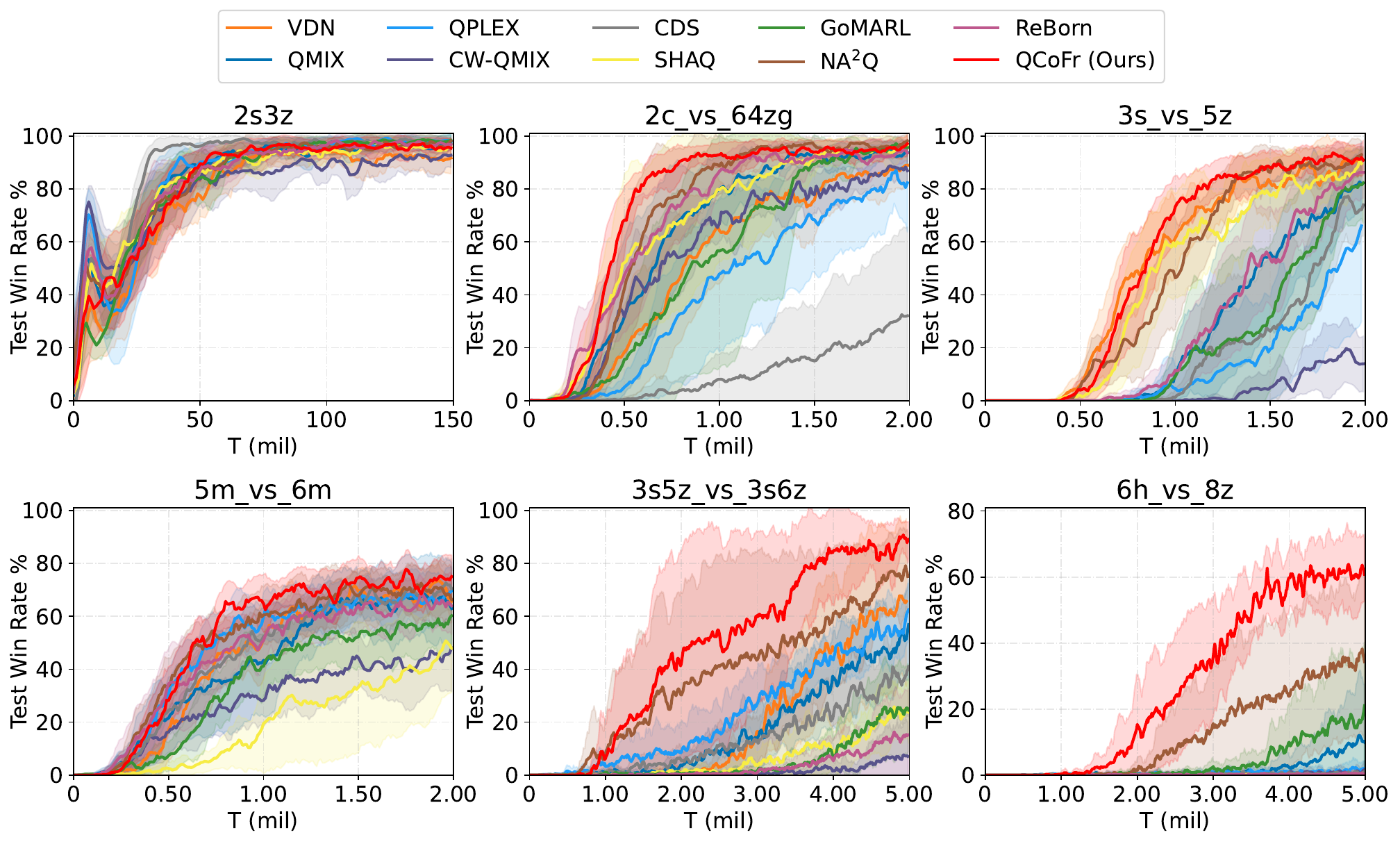}
	\caption{Performance results on SMAC benchmark.}
	\label{fig5}
\end{figure}

\begin{figure}[tb]
	\setlength{\abovecaptionskip}{-3mm}
	\setlength{\belowcaptionskip}{-5mm}
	\includegraphics[width=1\textwidth]{./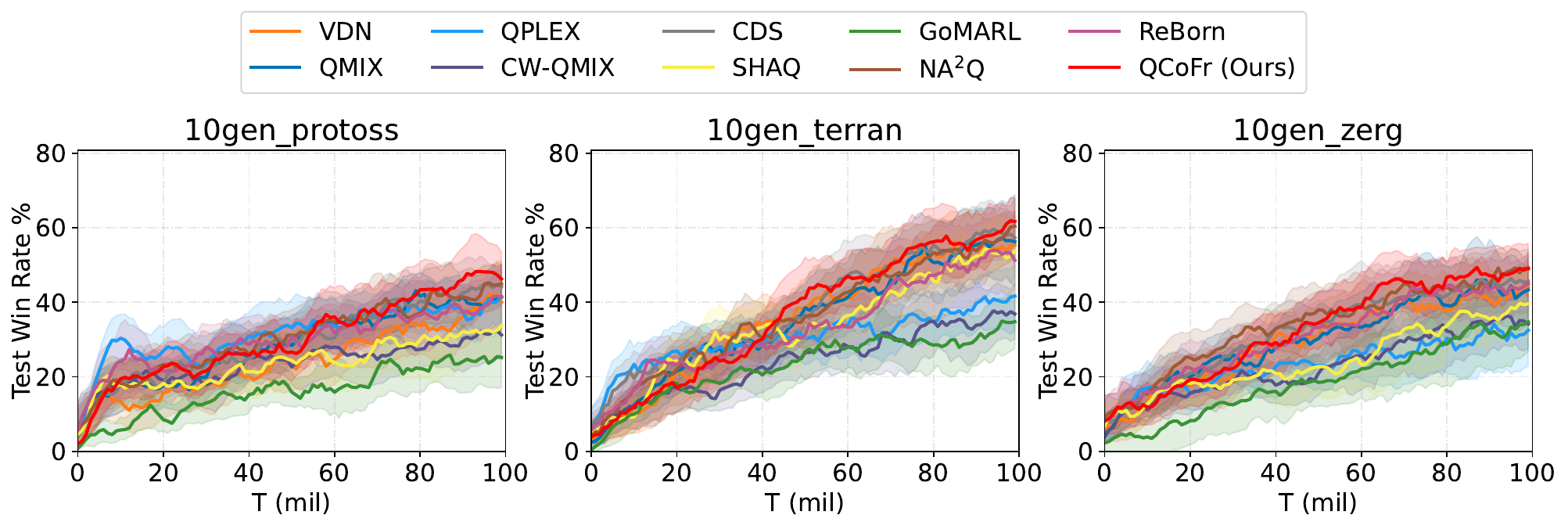}
	\caption{Performance results on SMACv2 benchmark.}
	\label{fig6}
\end{figure} 

\textbf{Performance on SMACv2.} 
We evaluate QCoFr performance on three scenarios from SMACv2, including \textit{zerg$\_$5$\_$vs$\_$5}, \textit{protoss$\_$5$\_$vs$\_$5}, and \textit{terran$\_$5$\_$vs$\_$5}.
SMACv2 introduces randomly generated and positioned for unit, increasing environmental stochasticity compared to SMAC.
The performance of QCoFr is significantly better than other algorithms across all scenarios.
In contrast, GoMARL performs the worst, which is due to its dynamic grouping structure, leading to slow convergence.
While SHAQ demonstrates marginally superior performance, its inability to model higher-order agent interactions limits its adaptability to all maps.
Compared to NA$^2$Q, QMIX, and CDS, QCoFr achieves better performance, which should benefit from the assistive information for estimating credit assignment, especially for the high-order interaction patterns among agents.

\subsection{Ablation Studies} \label{Ablation Studies}
To discuss the influence of each component, we conduct ablation studies about (a) the number of interaction orders among agents, (b) the CFN structure, and assistive information on performance.

\textbf{The Number of Interaction Orders.} 
We conduct ablation experiments by varying the depth $d$ of CFN, which governs the highest order of inter-agent interactions modeled by the framework.
On \textit{2c\_vs\_64zg} and \textit{3s\_vs\_5z} maps, \textit{QCoFr~($d$ = 2)} yields significant performance improvements compared to \textit{QCoFr~($d$ = 1)}, which underscores the importance of explicitly modeling higher-order inter-agent effects as shown in \textit{Fig.~\ref{fig7}}.
However, beyond the optimum, performance could degrade, likely due to overfitting to spurious higher-order correlations. 
The same trend emerges in the \textit{6h\_vs\_8z} scenario, where the model achieves its best performance with depth \textit{$d$ = 6} before degrading with further depth increases. 
These findings demonstrate that modeling higher-order interactions yields performance improvements for complex coordination tasks, but only within a bounded regime that aligns with the task's intrinsic interaction complexity.
Notably, our method using a single-layer CFN still outperforms QMIX, which confirms that the assistive information can better help deduce the contribution of agents for their team.

\begin{figure}[tb]
	\setlength{\abovecaptionskip}{-3mm}
	\setlength{\belowcaptionskip}{-4mm}
	\includegraphics[width=1\textwidth]{./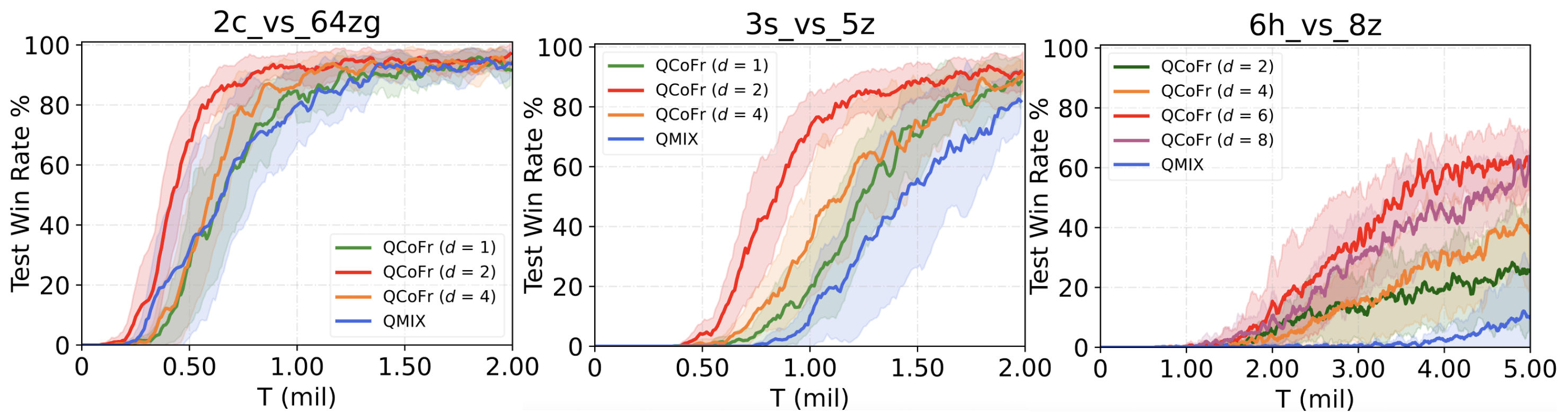}
	\caption{Comparison of different numbers of interaction orders.}
	\label{fig7}
\end{figure}

\begin{figure}[tb]
	\setlength{\abovecaptionskip}{-3mm}
	\setlength{\belowcaptionskip}{-5mm}
	\includegraphics[width=1\textwidth]{./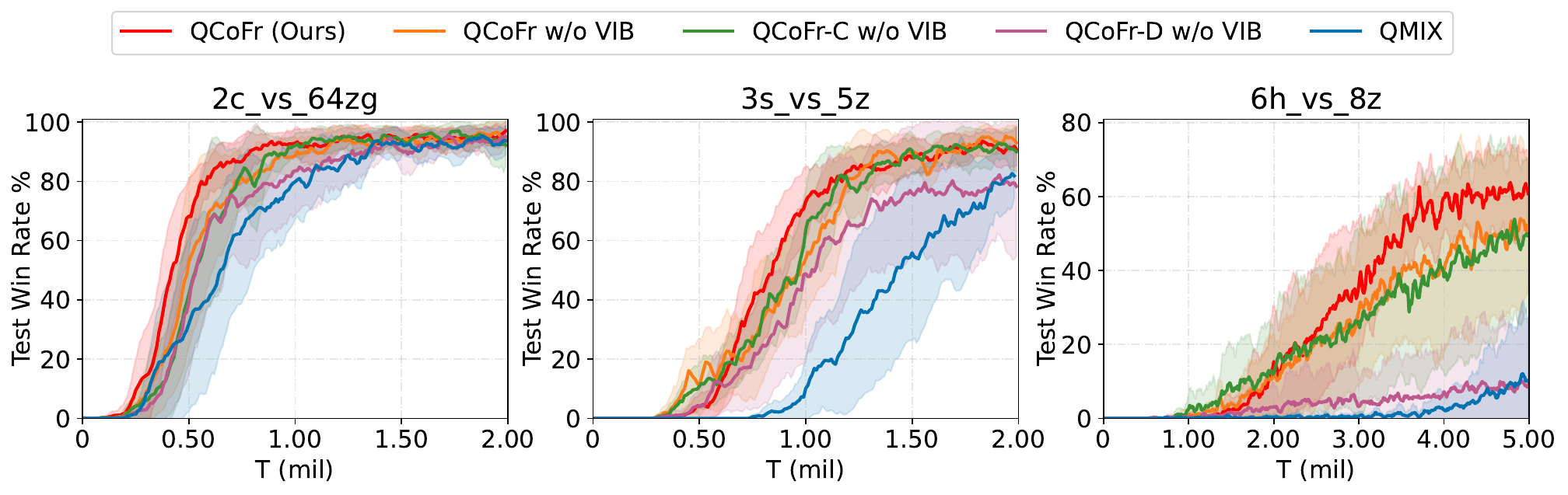}
	\caption{Performance with and without VIB and comparison of different CFN structures.}
	\label{fig8}
\end{figure}

\textbf{Impact of CFN Structure and Assistive Information Module.} 
To validate the effectiveness of the CFN structure, we compare \textit{QCoFr w/o VIB} against two CFN-based variants—\textit{QCoFr-C w/o VIB} and \textit{QCoFr-D w/o VIB}, under a unified setting without the VIB module. As shown in \textit{Fig.~\ref{fig8}}, \textit{QCoFr w/o VIB} achieves comparable performance to \textit{QCoFr-C w/o VIB}, while offering a simpler structure and lower computational cost.
In contrast, \textit{QCoFr-D w/o VIB} exhibits significantly degraded performance, especially on the \textit{6h\_vs\_8z} scenario, which underscores the necessity of explicit inter-agent interaction modeling for complex tasks.
To summarize, we adopt CFN as the backbone for its ability to capture higher-order interactions with a simple, efficient design.
By integrating a VIB-based assistive information generation module, QCoFr achieves significant improvements in both convergence speed and final performance, demonstrating its efficacy and practical utility.
More additional ablations in Appendix~\ref{Ablation} detail the respective contributions of the VIB and CFN modules.

\begin{figure}[tb]	
    \centering 
	\subfigbottomskip=0pt 
	\subfigcapskip=-2pt 
	\setlength{\abovecaptionskip}{1mm}
	\setlength{\belowcaptionskip}{-6mm}

    \subfigure[VDN]{ \label{fig9a}
        \includegraphics[width=0.38\textwidth]{./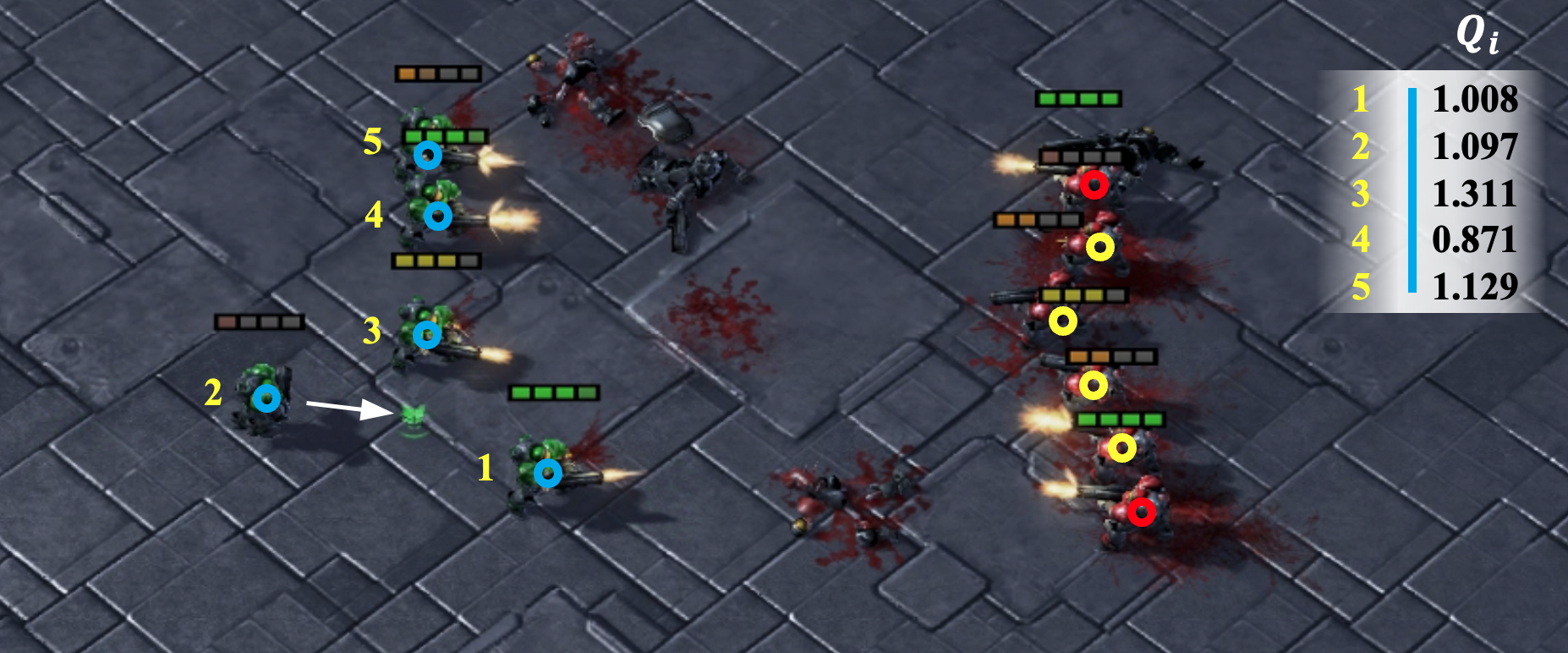}}
    \hfill
    \subfigure[QCoFr ($d$ = 2)]{ \label{fig9b}
        \includegraphics[width=0.38\textwidth]{./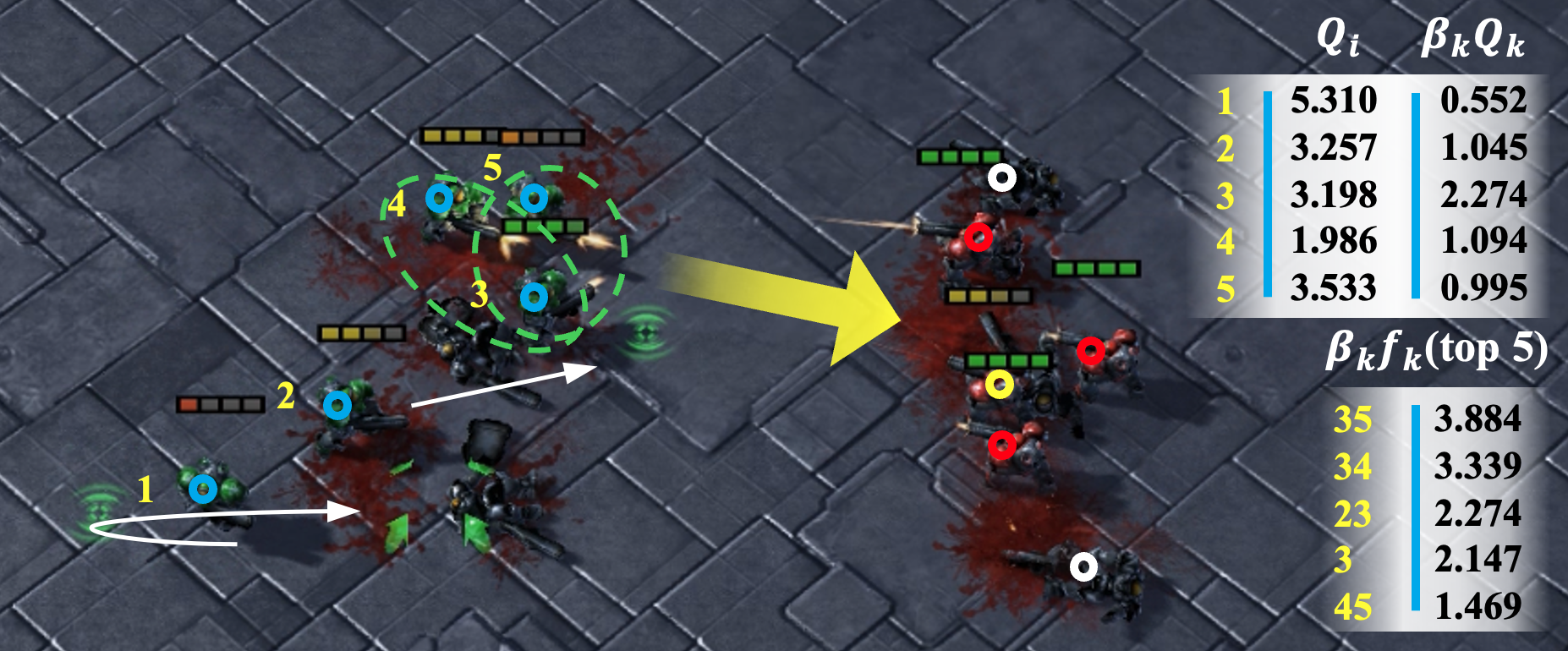}}
    \hfill
    \subfigure[]{ \label{fig9c}
        \includegraphics[width=0.195\textwidth]{./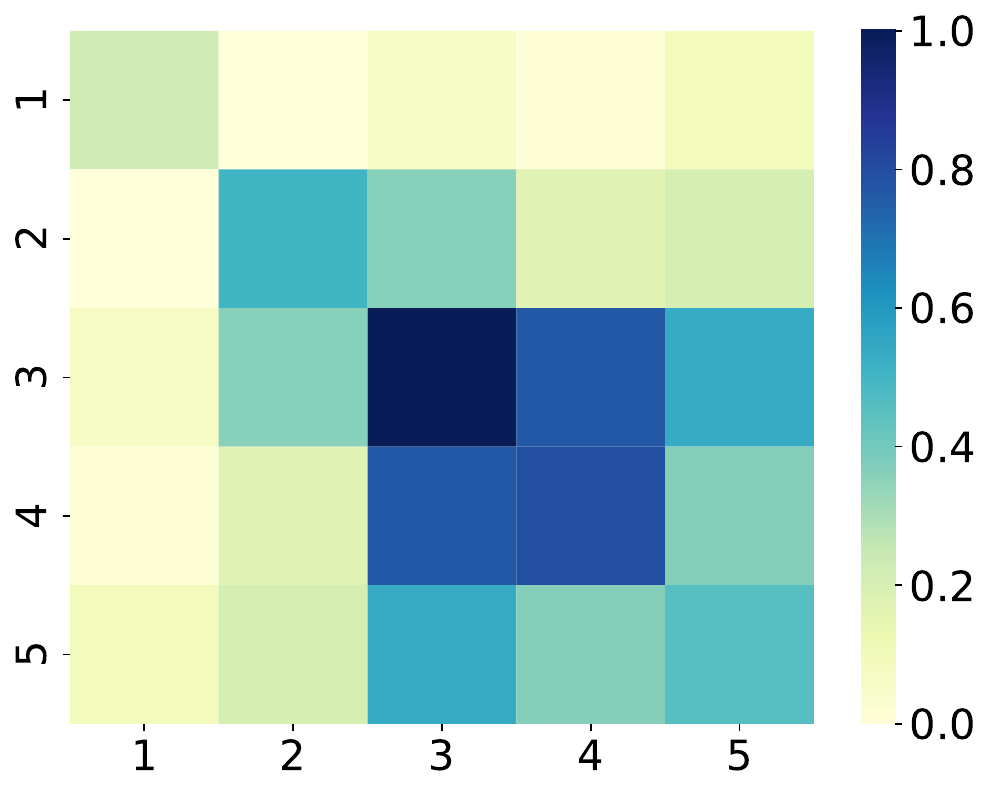}}

    \subfigure[QMIX]{ \label{fig9d}
        \includegraphics[width=0.38\textwidth]{./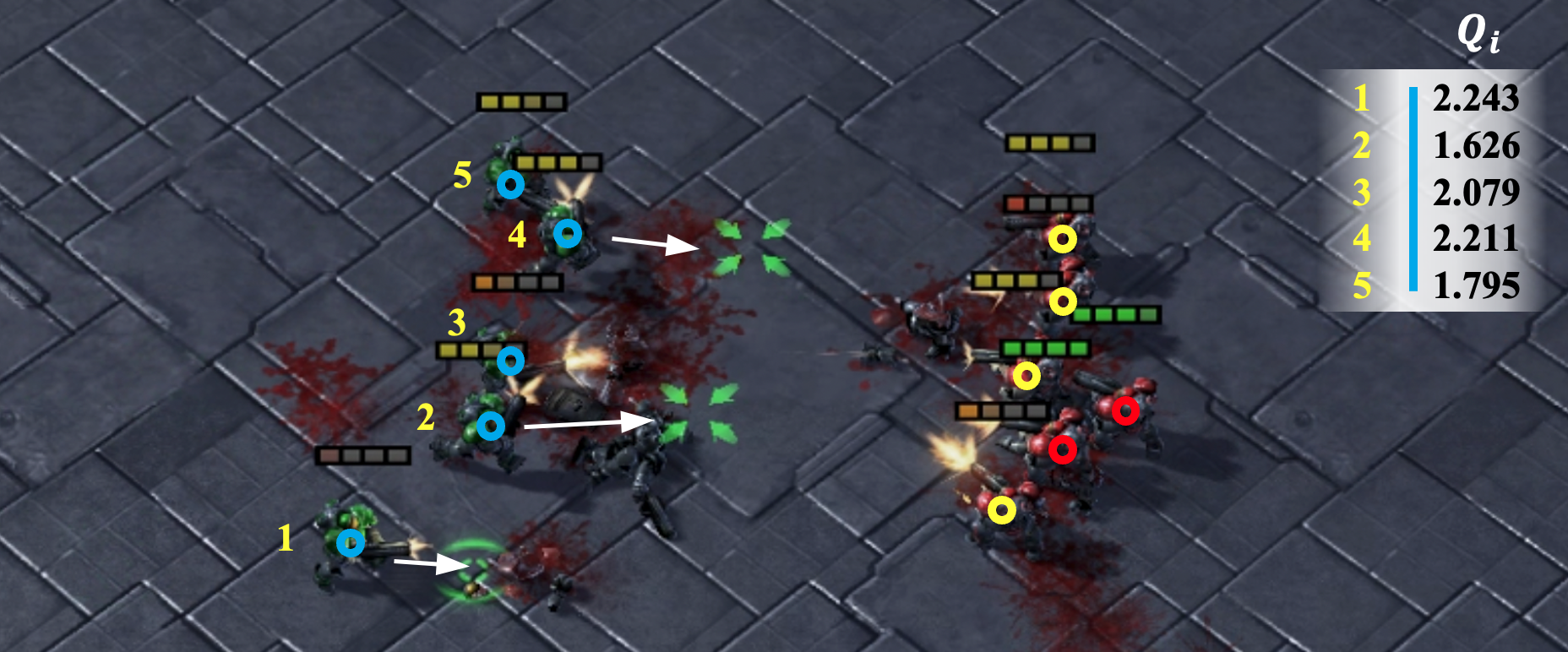}}
    \hfill
    \subfigure[QCoFr ($d$ = 4)]{ \label{fig9e}
        \includegraphics[width=0.38\textwidth]{./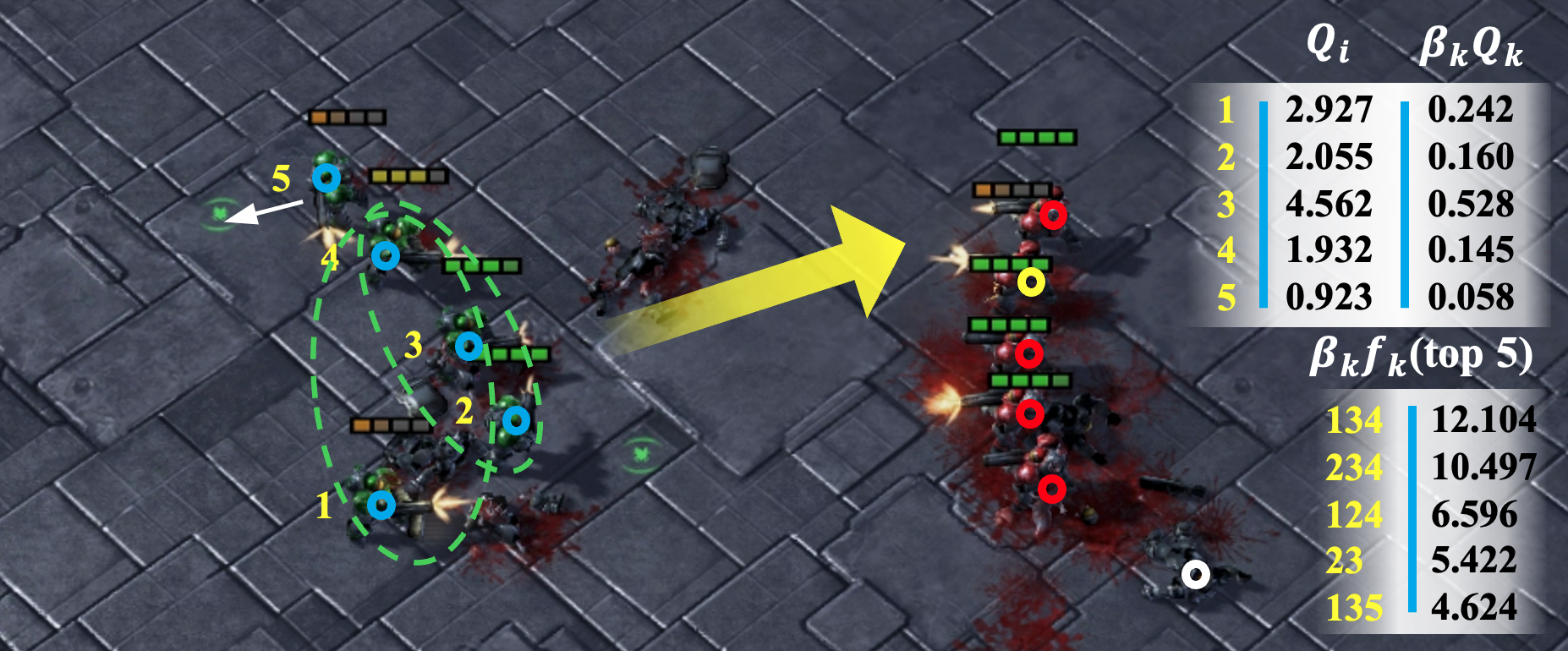}}
    \hfill
    \subfigure[]{ \label{fig9f}
        \includegraphics[width=0.19\textwidth]{./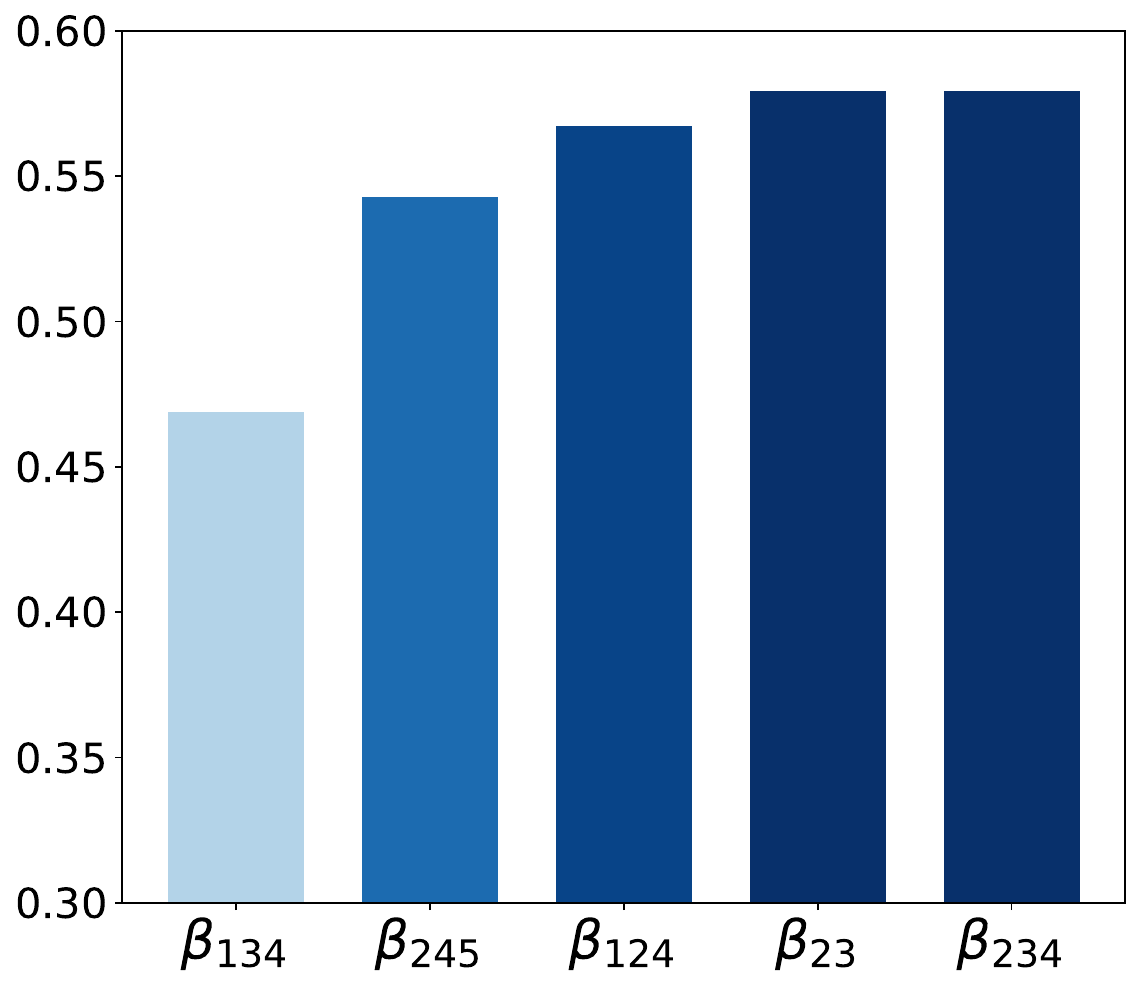}}
    \caption{Visualization of evaluation results for QCoFr and baselines on \textit{5m\_vs\_6m} map. (b) and (e) illustrate the behaviors of QCoFr with CFN depths of 2 and 4 at a specific time step. (c) visualizes the weights of individual agents and pairwise coalitions corresponding to the behavior shown in (b), while (f) presents the top five highest-weighted coalitions extracted from (e), due to the increased number of possible interactions. (a) and (d) show the behaviors of VDN and QMIX for comparison. }
    \label{fig9}
\end{figure}

\begin{figure}[tb]
	\centering 
	\subfigbottomskip=0pt 
	\subfigcapskip=-2pt 
	\setlength{\abovecaptionskip}{2mm}
	\setlength{\belowcaptionskip}{-5mm}
	\subfigure[]{ \label{fig10a}
		\includegraphics[width=0.31\textwidth]{./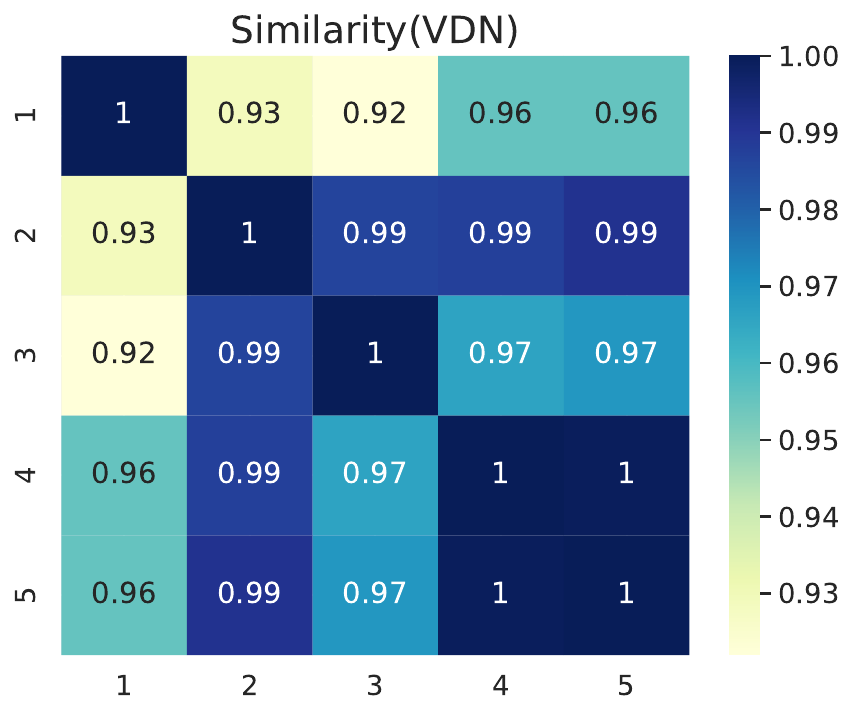}}	
        \hfill
	\subfigure[]{ \label{fig10b}
		\includegraphics[width=0.31\textwidth]{./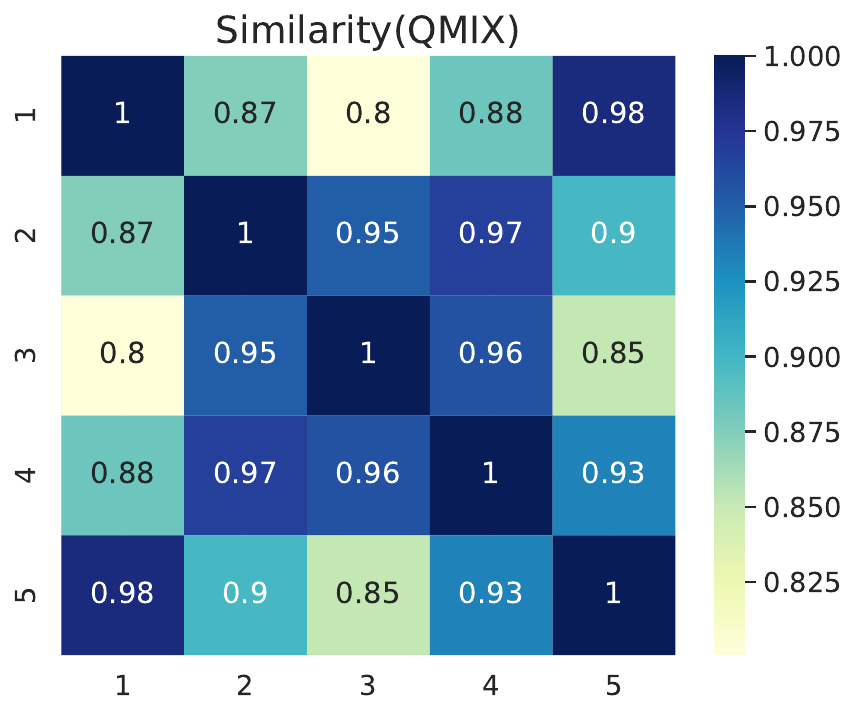}}
        \hfill
	\subfigure[]{ \label{fig10c}
		\includegraphics[width=0.31\textwidth]{./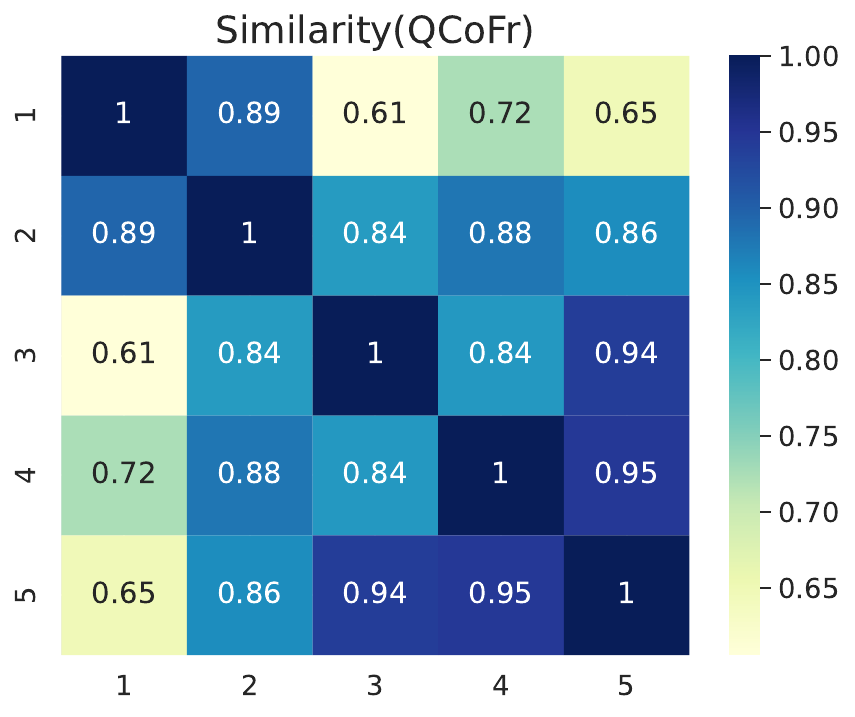}}  
        \hfill
	\caption{Visualization of agent diversity via QCoFr and baselines. Cosine similarities of agent Q-values under the same observation indicate that QCoFr yields more diverse agent behaviors.}
    \label{fig10}
\end{figure}

\subsection{Interpretability}
To intuitively demonstrate the interpretability of QCoFr, we demonstrate \textit{QCoFr~($d$ = 2)} and \textit{QCoFr~($d$ = 4)} and display some key frames on \textit{5m\_vs\_6m} scenario.
As shown in \textit{Fig.~\ref{fig9b}}, the pairwise coalitions (agent 3, agent 4) and (agent 3, agent 5) emerge as coordinated coalitions that focus fire on the same enemy unit.
Here, we find that they have higher coalition contributions of $3.884$ and $3.339$ than the others.
In contrast, agent 1 disengages from combat when it has low health value to avoid early elimination, which indicates that it does not contribute to the team and obtains a lower contribution of $0.552$.
These observations highlight QCoFr's ability to facilitate diverse role specializations and demonstrate the advantage of modeling agent interactions, which helps deduce the contribution of individual agents and coalitions within their team.

Further, we show the behavior of the agents when \textit{$d$ = 4} for QCoFr to demonstrate that the model captures complex agent cooperation as shown in \textit{Fig.~\ref{fig9e}}.
As shown in \textit{Fig.~\ref{fig9c}}, the weights are predominantly concentrated on agent 3 and agent 4, as well as their interactions with other agents, consistent with their coordinated attacks.
While in deeper QCoFr, top-ranked terms are predominantly higher-order, such as interactions among agents 1, 3, and 4, which yield the highest contributions, demonstrating the model's capacity to encode intricate cooperative dynamics.
In contrast, QMIX and VDN produce less differentiated policies, with individual Q-values remaining close, making their decision logic harder to interpret.
Furthermore, as shown in \textit{Fig.~\ref{fig9a}} and \textit{Fig.~\ref{fig9d}}, agents under VDN and QMIX frequently attack multiple enemy units simultaneously, leading to prolonged numerical disadvantage and increased failure risk. 

To evaluate whether QCoFr facilitates more diverse agent behaviors during training, we compute the cosine similarity of individual Q-values under the same observation across different methods, as shown in \textit{Fig.~\ref{fig10}}.
QCoFr yields consistently lower similarity, which indicates more diverse and thus more specialized agent preferences, in line with the qualitative findings in \textit{Fig.~\ref{fig9}}.
In contrast, VDN and QMIX maintain similarity close to $1$, which reflects homogenized preferences that hinder the emergence of complex cooperative strategies and obscure individual decision-making.

\section{Conclusion}
In this paper, we introduce QCoFr, an interpretable value-based MARL framework grounded in the expressive and compact structure of continued fractions.
By leveraging continued fraction neural networks and a variational information bottleneck over agent histories, QCoFr explicitly models agent interactions of arbitrary order while maintaining low model complexity and inherent interpretability.
Extensive experiments show that QCoFr matches or surpasses strong value-decomposition baselines and yields clearer attributions to individuals and coalitions.
We believe QCoFr presents a promising direction for designing MARL algorithms with mathematically grounded, interpretable structures and highlights the importance of modeling higher-order coordination.
Future work will explore adaptive mechanisms to dynamically adjust the depth of interaction modeling in response to task complexity.

\textbf{Limitations.} In the current implementation, the CFN depth is fixed per task, which may be suboptimal or wasteful.
A promising direction is to adapt depth during training (or per state) to balance computation with representational power.

\section{Acknowledgements}
This work was supported in part by the China Postdoctoral Science Foundation under Grant Number 2025T180877, the Jiangsu Funding Program for Excellent Postdoctoral Talent 2025ZB267, the National Key Research and Development Program of China under Grant 2023YFD2001003, Major Science and Technology Project of Jiangsu Province under Grant BG2024041, and the Fundamental Research Funds for the Central Universities under Grant 011814380048.

\bibliographystyle{unsrt}
\bibliography{paper_nips}

\newpage
\appendix
\section{Related Work}\label{rw}
\textbf{Value Decomposition.}
Centralized training with decentralized execution (CTDE) has emerged as a powerful paradigm in MARL~\cite{oliehoek2008optimal, kraemer2016multi}, where global information can be accessed during centralized training and learned policies are executed with only local information in a decentralized way.
Under the CTDE paradigm, value decomposition methods show their strength in expressing the joint value function conditioned on individual value functions.
VDN~\cite{sunehag2017value} introduces a linear decomposition, representing the joint Q-value as a sum of agent-wise Q-values.
However, its additive nature ignores inter-agent interactions, limiting its scalability to complex coordination tasks.
QMIX~\cite{rashid2018qmix} improves representational capacity by employing a nonlinear monotonic mixing network parameterized via hypernetworks, but the imposed monotonicity constraint hinders its flexibility.
To overcome this, QTRAN~\cite{son2019qtran} introduces a relaxed transformation-based decomposition to bypass monotonicity, while WQMIX~\cite{rashid2020weighted} incorporates a weighted projection to enhance approximation quality.
QPLEX~\cite{wang2020qplex} further refines the decomposition by adopting a duplex dueling architecture that satisfies the Individual-Global-Max (IGM) principle via an advantage-based formulation.
Despite their improvements in expressiveness, these methods primarily focus on functional accuracy and provide little insight into the underlying coordination structure.
This lack of interpretability becomes particularly problematic in partially observable and interaction-intensive environments, where understanding agent dependencies is crucial for robust credit assignment.
To address this, we propose a novel interpretable value decomposition framework that explicitly encodes high-order interactions, offering both performance and transparency.

\textbf{Interpretable MARL.}
Recent advances in interpretable MARL can be broadly categorized into two paradigms, focusing either on (i) intrinsic interpretability or (ii) post-hoc explanation~\cite{kraus2020ai}.
Intrinsic interpretability requires the learned model to be self-understandable by nature, which is achieved by using a transparent class of models, whereas post-hoc explanation entails learning a second model to explain an already-trained black-box model.
Post-hoc methods provide auxiliary insights without modifying the underlying learning process.
For instance, SQDDPG~\cite{wang2020shapley} estimates individual agent contributions via Shapley Q-values, while Goto et al.~\cite{goto2022solving} use masked-attention to identify salient observation regions in multi-vehicle coordination tasks.
Although informative, these techniques lack robustness guarantees and struggle to recover the relational or temporal structure intrinsic to multi-agent cooperation.
In contrast, intrinsically interpretable approaches seek to construct models whose decision logic is understandable by design.
Tree-based architectures such as MIXRT~\cite{liu2025mixrts} and MAVIPER~\cite{milani2022maviper} represent agent policies using soft or symbolic decision trees, providing explicit reasoning paths.
DTPO~\cite{vos2024optimizing} advances this line by directly optimizing tree structures via policy gradients, combining transparency with performance.
Attention-based models, such as MAAC~\cite{iqbal2019actor}, further enhance interpretability by dynamically identifying inter-agent dependencies, while other methods promote explainability through latent skill inference~\cite{yang2020hierarchical} or constrained policy spaces that encode global objectives~\cite{yun2023explainable}.

Within the value decomposition framework, central to cooperative MARL, several works also try to understand how agents cooperate via agent-level contributions.
VDN~\cite{sunehag2017value} factorizes the team reward additively, assuming agent independence.
SHAQ~\cite{wang2022shaq} adopts Shapley value theory to quantify marginal contributions across coalitions.
More recently, NA$^2$Q~\cite{liu2023n} expands the joint value function via a Taylor-like decomposition to capture high-order interactions.
However, such expansions scale exponentially with the number of agents, leading to substantial computational and interpretability challenges.
These limitations highlight the need for a more principled and scalable formulation that can compactly model high-order agent interactions without sacrificing transparency.
To this end, we propose a novel approach that integrates continued fraction networks into the value decomposition framework. 
By leveraging the recursive structure of continued fractions, our method enables compact and interpretable representations of arbitrary-order interactions while maintaining linear complexity with respect to the number of agents.
This formulation provides a powerful alternative to polynomial expansion-based methods, offering both expressive capacity and interpretability in large-scale cooperative MARL settings.

\setcounter{myTheo}{0}
\setcounter{myDef}{0}
\newpage
\section{Proof}\label{Proofs}

\subsection{Objective Functions for Variational Information Bottlenecks}\label{VIB}
Considering the Markov chain $\boldsymbol{u^*} \leftrightarrow \boldsymbol{h} \leftrightarrow \boldsymbol{m}$ , which means the assistive information cannot depend directly on the  $\boldsymbol{u^*}$. So we have $p(\boldsymbol{m} \mid \boldsymbol{h},\boldsymbol{u^*}) = p(\boldsymbol{m} \mid \boldsymbol{h})$.

As in the IB, the objective can be written as:
\begin{equation}
J_{I B}(\boldsymbol{\phi})=I(\boldsymbol{m}, \boldsymbol{u^*} ; \boldsymbol{\phi})-\beta I(\boldsymbol{m}, \boldsymbol{h} ; \boldsymbol{\phi}) .
\end{equation}
The $\beta$ is to realize the trade-off between a succinct representation and inferencing ability.
\begin{myTheo}[Lower Bound for $I(\boldsymbol{m}, \boldsymbol{u^*}; \boldsymbol{\phi})$]
Let the representation $m_i$ be reparameterized as a random variable drawn from a multivariate Gaussian distribution $m_i \sim \mathcal{N}(f_m(h_i; \phi_m), \boldsymbol{I})$,
where $f_m$ is an encoder parameterized by $\phi_m$, $h_i$ denotes the hidden state of agent $i$, and $\boldsymbol{I}$ is the identity covariance matrix.
Then, the mutual information between the assistive information $\boldsymbol{m}$ and the optimal joint action $\boldsymbol{u}^*$ is lower-bounded as:
\begin{equation}\label{Eq9}
\begin{array}{r}
I(\boldsymbol{m}, \boldsymbol{u^*}; \boldsymbol{\phi}) \geq \frac{1}{N} \sum_{i=1}^N \mathbb{E}_{\epsilon \sim p(\epsilon)} \left[-\log q(u^*_i \mid f(h_i, \epsilon))\right],
\end{array}
\end{equation}
where $q(u^*_i \mid m_i)$ is a variational distribution approximating the true posterior $p(u^*_i \mid m_i)$, and $m_i=f(h_i, \epsilon)$ denotes a deterministic function of $h_i$ and the Gaussian random variable $\epsilon$.
\label{Theo1}
\end{myTheo}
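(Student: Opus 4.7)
The plan is to adapt the standard variational lower bound on mutual information (the deep VIB result of Alemi et al.) to this multi-agent, reparameterized setting. The derivation splits into a variational-KL step, a per-agent factorization step, and a reparameterization step; none of these is deep individually, but their composition yields the stated bound.

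First, I would write the mutual information by its definition as $I(\boldsymbol{m}, \boldsymbol{u^*}; \boldsymbol{\phi}) = \int p(\boldsymbol{m},\boldsymbol{u^*}) \log \tfrac{p(\boldsymbol{u^*}\mid\boldsymbol{m})}{p(\boldsymbol{u^*})}\, d\boldsymbol{m}\, d\boldsymbol{u^*}$, invoking the Markov chain $\boldsymbol{u^*} \leftrightarrow \boldsymbol{h} \leftrightarrow \boldsymbol{m}$ so that $p(\boldsymbol{m} \mid \boldsymbol{h}, \boldsymbol{u^*}) = p(\boldsymbol{m} \mid \boldsymbol{h})$ and $p(\boldsymbol{m}, \boldsymbol{u^*}) = \int p(\boldsymbol{h})\, p(\boldsymbol{u^*}\mid\boldsymbol{h})\, p(\boldsymbol{m}\mid\boldsymbol{h})\, d\boldsymbol{h}$. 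I would then introduce the variational surrogate $q(\boldsymbol{u^*}\mid\boldsymbol{m})$ for the intractable true posterior $p(\boldsymbol{u^*}\mid\boldsymbol{m})$ and apply the non-negativity of $\mathrm{KL}(p(\boldsymbol{u^*}\mid\boldsymbol{m}) \,\|\, q(\boldsymbol{u^*}\mid\boldsymbol{m}))$ to replace $p$ by $q$ inside the logarithm. This produces
\begin{equation}
I(\boldsymbol{m}, \boldsymbol{u^*}; \boldsymbol{\phi}) \geq \mathbb{E}_{p(\boldsymbol{m}, \boldsymbol{u^*})}[\log q(\boldsymbol{u^*}\mid\boldsymbol{m})] + H(\boldsymbol{u^*}),
\end{equation}
and I would discard $H(\boldsymbol{u^*})$ since it does not depend on $\boldsymbol{\phi}$.

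Next, I would impose the per-agent decoder factorization $q(\boldsymbol{u^*}\mid\boldsymbol{m}) = \prod_{i=1}^{N} q(u^*_i \mid m_i)$, consistent with the decentralized architecture, so that $\log q$ decomposes as $\sum_{i=1}^{N} \log q(u^*_i \mid m_i)$; scaling by $1/N$ yields the per-agent average appearing in the statement (still a valid, merely looser, bound once the paper's cross-entropy sign convention from $\mathcal{L}_{VIB}$ is accounted for). Finally, I would apply the reparameterization trick to the Gaussian $m_i \sim \mathcal{N}(f_m(h_i;\phi_m), \boldsymbol{I})$: writing $m_i = f(h_i, \epsilon)$ with $\epsilon \sim p(\epsilon) = \mathcal{N}(0, \boldsymbol{I})$, the change of variables $p(m_i \mid h_i)\, dm_i = p(\epsilon)\, d\epsilon$ rewrites the expectation over $m_i$ as one over $\epsilon$, which is the form stated in the theorem and the form needed for backpropagating through $\phi_m$.

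The main obstacle I anticipate is purely structural bookkeeping: carefully marginalizing over $\boldsymbol{h}$ using the Markov chain so that the joint expectation $\mathbb{E}_{p(\boldsymbol{m},\boldsymbol{u^*})}[\log q(\boldsymbol{u^*}\mid\boldsymbol{m})]$ factorizes cleanly into per-agent expectations $\mathbb{E}_{p(h_i, u^*_i)}\,\mathbb{E}_{\epsilon}[\log q(u^*_i \mid f(h_i,\epsilon))]$, and reconciling the $1/N$ scaling and the $-\log q$ sign in the statement with the standard $+\log q$ lower bound. Everything else — the KL manipulation and the Gaussian reparameterization — is textbook, so once that decomposition is justified the result drops out.
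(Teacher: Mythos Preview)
Your proposal is correct and follows essentially the same route as the paper: definition of mutual information, Markov-chain factorization of the joint, KL non-negativity to swap $p(\boldsymbol{u^*}\mid\boldsymbol{m})$ for the variational $q$, dropping the $\boldsymbol{\phi}$-independent entropy $H(\boldsymbol{u^*})$, and then the Gaussian reparameterization $p(m_i\mid h_i)\,dm_i = p(\epsilon)\,d\epsilon$. The only organizational difference is that the paper silently works at the per-agent level from the outset (writing the integrals in $m_i,u_i^*$) and obtains the $\tfrac{1}{N}\sum_i$ as an empirical replacement of $\int p(h_i)p(u_i^*\mid h_i)\,dh_i\,du_i^*$, whereas you derive the joint bound first and then invoke a product decoder $q(\boldsymbol{u^*}\mid\boldsymbol{m})=\prod_i q(u_i^*\mid m_i)$ to factorize; both readings land on the same inequality.
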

\begin{proof} 
    \begin{equation*}
    \begin{aligned}
    I(\boldsymbol{m}, \boldsymbol{u^*} ; \boldsymbol{\phi}) 
    = & \int d m_i d u_i^* p\left(m_i, u_i^*\right) \log \frac{p\left(m_i, u_i^*\right)}{p\left(m_i\right) p\left(u_i^*\right)} \\
    = & \int d m_i d u_i^* p\left(m_i, u_i^*\right) \log \frac{p\left(u_i^* \mid m_i \right)}{p\left(u_i^*\right)},
    \end{aligned}
    \end{equation*}
    where $p\left(u_i^* \mid m_i \right)$ is fully defined by our encoder and Markov Chain as follows:
    \begin{equation*}
    \begin{aligned}
     p(u_i^* \mid m_i)&=\int d h_i p(h_i, u_i^* \mid m_i) \\
    & =\int d h_i p(u_i^* \mid h_i) p(h_i \mid m_i) \\
    & =\int d h_i \frac{p(u_i^* \mid h_i) p(m_i \mid h_i) p(h_i)}{p(m_i)}.
    \end{aligned}
    \end{equation*}
    Since this is intractable in our case, let $q(u^*_i \mid m_i)$ be a variational approximation to $p(u^*_i \mid m_i)$, where this is our decoder which we will take to another neural network with its own set of parameters. Using the fact that Kullback Leibler divergence is always positive, we have
    \begin{equation*}
    \mathrm{KL}[p(u^*_i \mid m_i), q(u^*_i \mid m_i)] \geq 0 
    \end{equation*}
    \begin{equation*}
    \Longrightarrow \int d u^*_i p(u^*_i \mid m_i) \log p(u^*_i \mid m_i) \geq \int d u^*_i p(u^*_i \mid m_i) \log q(u^*_i \mid m_i),
    \end{equation*}
    and hence
    \begin{equation*}
    \begin{aligned}
    I(\boldsymbol{m}, \boldsymbol{u^*} ; \boldsymbol{\phi}) & \geq \int d u^*_i d m_i p(u^*_i, m_i) \log \frac{q(u^*_i \mid m_i)}{p(u^*_i)} \\
    & =\int d u^*_i d m_i p(u^*_i, m_i) \log q(u^*_i \mid m_i)-\int d u^*_i p(u^*_i) \log p(u^*_i) \\
    & =\int d u^*_i d m_i p(u^*_i, m_i) \log q(u^*_i \mid m_i)+H(u^*_i)\\
    & =\int d h_i d u^*_i d m_i p(h_i) p(u^*_i \mid h_i) p(m_i \mid h_i) \log q(u^*_i \mid m_i) +H(u^*_i)\\
    & =\frac{1}{N} \sum_{i=1}^N \left[\int d m_i p\left(m_i \mid \tau_i\right) \log q\left(u^*_i \mid m_i\right)\right]+H(u^*_i).
    \end{aligned}
    \end{equation*}
    Notice that the entropy of our labels $H(u^*_i)$ is independent of our optimization procedure and so can be ignored. And as we can rewrite $p(m_i \mid h_i) d m_i=p(\epsilon) d \epsilon$, $m_i=f(h_i, \epsilon)$. So we have
    \begin{equation*}
    I(\boldsymbol{m}, \boldsymbol{u^*}; \boldsymbol{\phi}) \geq \frac{1}{N} \sum_{i=1}^N \mathbb{E}_{\epsilon \sim p(\epsilon)} \left[-\log q(u^*_i \mid f(h_i, \epsilon))\right].
    \end{equation*}
\end{proof} 

\begin{myTheo}[Upper Bound for $I(\boldsymbol{m}, \boldsymbol{h}; \boldsymbol{\phi})$]
Let $\tilde{q}(\boldsymbol{m})$ denote a variational approximation of the marginal distribution $p(\boldsymbol{m})$.
Then, the mutual information between the representation $\boldsymbol{m}$ and the hidden state $\boldsymbol{h}$ admits the following upper bound:
\begin{equation}\label{Eq10}
I(\boldsymbol{m}, \boldsymbol{h}; \boldsymbol{\phi}) \leq \mathrm{KL}(p(\boldsymbol{m} \mid h_i) \,\|\, \tilde{q}(\boldsymbol{m})).
\end{equation}
\label{Theo2}
\end{myTheo}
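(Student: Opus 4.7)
The plan is to follow the standard variational upper bound argument used in the derivation of the deep variational information bottleneck. First, I would write the mutual information in its definitional form
\begin{equation*}
I(\boldsymbol{m},\boldsymbol{h};\boldsymbol{\phi}) = \int p(\boldsymbol{m},h_i)\,\log\frac{p(\boldsymbol{m}\mid h_i)}{p(\boldsymbol{m})}\,d\boldsymbol{m}\,dh_i,
\end{equation*}
and split it into the conditional term $\int p(\boldsymbol{m},h_i)\log p(\boldsymbol{m}\mid h_i)\,d\boldsymbol{m}\,dh_i$ and the marginal entropy-like term $-\int p(\boldsymbol{m})\log p(\boldsymbol{m})\,d\boldsymbol{m}$. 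The obstruction to computing this directly is exactly the same one encountered in Theorem~\ref{Theo1}: the true marginal $p(\boldsymbol{m})=\int p(\boldsymbol{m}\mid h_i)\,p(h_i)\,dh_i$ is intractable because it requires marginalising over the encoder for every hidden state.

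Next, I would introduce the variational approximation $\tilde{q}(\boldsymbol{m})$ of $p(\boldsymbol{m})$ and invoke the non-negativity of the Kullback--Leibler divergence,
\begin{equation*}
\mathrm{KL}\!\left(p(\boldsymbol{m}) \,\|\, \tilde{q}(\boldsymbol{m})\right) \geq 0
\quad\Longleftrightarrow\quad
\int p(\boldsymbol{m})\log p(\boldsymbol{m})\,d\boldsymbol{m} \;\geq\; \int p(\boldsymbol{m})\log \tilde{q}(\boldsymbol{m})\,d\boldsymbol{m}.
\end{equation*}
Multiplying by $-1$ and substituting this into the split expression replaces the intractable $-\int p(\boldsymbol{m})\log p(\boldsymbol{m})\,d\boldsymbol{m}$ by the tractable upper surrogate $-\int p(\boldsymbol{m})\log \tilde{q}(\boldsymbol{m})\,d\boldsymbol{m}$, preserving the direction of inequality.

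Combining the two integrals back together then yields
\begin{equation*}
I(\boldsymbol{m},\boldsymbol{h};\boldsymbol{\phi}) \;\leq\; \int p(h_i)\int p(\boldsymbol{m}\mid h_i)\,\log\frac{p(\boldsymbol{m}\mid h_i)}{\tilde{q}(\boldsymbol{m})}\,d\boldsymbol{m}\,dh_i \;=\; \mathbb{E}_{h_i \sim p(h_i)}\!\bigl[\mathrm{KL}\!\left(p(\boldsymbol{m}\mid h_i)\,\|\,\tilde{q}(\boldsymbol{m})\right)\bigr],
\end{equation*}
which matches the statement of the theorem (interpreting the right-hand side as the per-sample KL term that is averaged over the data distribution in the empirical objective in Eq.~\ref{Eq11}). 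I do not anticipate any real obstacle: the entire argument rests on the non-negativity of KL and one algebraic rearrangement. The only subtlety worth flagging in the writeup is making the expectation over $h_i$ explicit, since the theorem statement uses a compact notation $\mathrm{KL}(p(\boldsymbol{m}\mid h_i)\|\tilde{q}(\boldsymbol{m}))$ that naturally couples with the empirical average over the batch that already appears in $\mathcal{L}_{VIB}$. Choosing $\tilde{q}(\boldsymbol{m})$ to be a standard Gaussian (matching the Gaussian encoder in Theorem~\ref{Theo1}) then produces a closed-form KL that can be minimised by stochastic gradient descent, completing the link between the bound and the training loss.
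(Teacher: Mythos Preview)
Your proposal is correct and follows essentially the same route as the paper's own proof: write $I(\boldsymbol{m},\boldsymbol{h};\boldsymbol{\phi})$ in its definitional form, split off the intractable $-\int p(\boldsymbol{m})\log p(\boldsymbol{m})\,d\boldsymbol{m}$, and replace it with $-\int p(\boldsymbol{m})\log \tilde{q}(\boldsymbol{m})\,d\boldsymbol{m}$ via the non-negativity of $\mathrm{KL}(p(\boldsymbol{m})\|\tilde{q}(\boldsymbol{m}))$. Your explicit treatment of the expectation over $h_i$ is in fact a slight improvement in clarity over the paper, which passes directly to the empirical average $\frac{1}{N}\sum_{i=1}^N$ without comment.
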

\begin{proof} 
    \begin{equation*}
    \begin{aligned}
    I(\boldsymbol{m}, \boldsymbol{h}; \boldsymbol{\phi})&=\int d m_i d h_i p(h_i, m_i) \log \frac{p(m_i \mid h_i)}{p(m_i)} \\
    & =\int d m_i d h_i p(h_i, m_i) \log p(m_i \mid h_i)-\int d m_i p(m_i) \log p(m_i).
    \end{aligned}
    \end{equation*}
    Let $\tilde{q}(m_i)$ be the variational approximation to the marginal distribution $p(m_i)= \int d h_i p(m_i \mid h_i) p(h_i)$. Since $\mathrm{KL}[p(m_i), \tilde{q}(m_i)] \geq 0 \Longrightarrow \int d m_i p(m_i) \log p(m_i) \geq \int d m_i p(m_i) \log \tilde{q}(m_i)$, we have
    \begin{equation*}
    \begin{aligned} 
    I(\boldsymbol{m}, \boldsymbol{h} ; \boldsymbol{\phi}) &\leq  \int d h_i d m_i p(h_i) p(m_i \mid h_i) \log \frac{p(m_i \mid h_i)}{\tilde{q}(m_i)} \\
    & =\frac{1}{N} \sum_{i=1}^N \left[ p\left(m_i \mid h_i\right) \log \frac{p(m_i \mid h_i)}{\tilde{q}(m_i)}\right]\\
    & =\mathrm{KL}\left[p\left(\boldsymbol{m} \mid h_i\right), \tilde{q}(\boldsymbol{m})\right].
    \end{aligned}
    \end{equation*}
\end{proof} 
Combining \textit{Theorem~\ref{Theo1}} and \textit{Theorem~\ref{Theo2}}, we have the objective functions for variational information bottlenecks, which is to minimize
\begin{equation}
\mathcal{L}_{VIB}=\frac{1}{N} \sum_{i=1}^N \mathbb{E}_{\epsilon \sim p(\epsilon)}\left[-\log q\left(u^*_i \mid f\left(h_i, \epsilon\right)\right)\right]+\beta \mathrm{KL}\left[p\left(\boldsymbol{m} \mid h_i\right), \tilde{q}(\boldsymbol{m})\right].
\end{equation}

\subsection{Correspondence between Continued Fraction Depth and the Order of Agent Interactions}\label{depth}
In this section, we establish a one-to-one correspondence between the depth $d$ of the continued fraction network and the order of agent interactions. This property allows the $d$-th order continued fraction to accurately represent the $d$-th order approximation of the agent's behavior.

Specifically, a continued fraction network of depth \( d \),
$\frac{1}{\boldsymbol{w_1} \boldsymbol{Q}+} \frac{1}{\boldsymbol{w_2} \boldsymbol{Q}+} \cdots \frac{1}{\boldsymbol{w_d} \boldsymbol{Q}}$
can be reformulated as \(f(\boldsymbol{Q}) = T_d(\boldsymbol{Q}) + \mathcal{O}_{d+1}(\boldsymbol{Q}) \),  where \( T_n(\boldsymbol{Q}) \) is a degree-\( d \) polynomial of \( \boldsymbol{Q} \), and \( \mathcal{O}_{d+1}(\boldsymbol{Q}) \) denotes terms of order \( d+1 \) or higher in \( \boldsymbol{Q} \).

By setting \( \boldsymbol{z} = \frac{1}{\boldsymbol{Q}} \), the continued fraction
$\frac{1}{\boldsymbol{w_1} \boldsymbol{Q} +} \frac{1}{\boldsymbol{w_2} \boldsymbol{Q} +} \frac{1}{\boldsymbol{w_3} \boldsymbol{Q} +} \cdots$
can be transformed into
\begin{equation}
\mathbf{K}(\boldsymbol{z})=\frac{\boldsymbol{z}}{\boldsymbol{w_1}} \frac{\boldsymbol{z}}{\boldsymbol{w_2}} \frac{\boldsymbol{z}}{\boldsymbol{w_3} + \cdots},
\end{equation}
since these approximants are arranged along the "staircase diagonals" of the Padé table.

\begin{myTheo} 
    For the \( d \)-th order truncation of the continued fraction \( R_k(\boldsymbol{z}) = \frac{A_k(\boldsymbol{z})}{B_k(\boldsymbol{z})} \), the following holds:
    \begin{equation}
    p_d = \left\lfloor \frac{d+1}{2} \right\rfloor, \quad q_d = \left\lfloor \frac{d}{2} \right\rfloor,
    \end{equation}
    where
    $p_d=\operatorname{deg}(A_d) , \quad q_d=\operatorname{deg}(B_d).$
    \label{Theo3}
\end{myTheo}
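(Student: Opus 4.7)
The plan is to use the classical three-term Wallis recursion for the numerator and denominator polynomials of successive convergents, then induct on $d$ with a parity split. Writing the continued fraction as $\mathbf{K}(z) = \dfrac{z}{w_1 + \frac{z}{w_2 + \cdots}}$, the convergents $R_d = A_d/B_d$ obey
$$A_d = w_d A_{d-1} + z A_{d-2}, \qquad B_d = w_d B_{d-1} + z B_{d-2},$$
with initial data $A_0 = 0$, $A_1 = z$, $B_0 = 1$, $B_1 = w_1$. A quick expansion of $R_1$ and $R_2$ confirms the base cases: $\deg A_1 = 1$, $\deg B_1 = 0$, $\deg A_2 = 1$, $\deg B_2 = 1$, all agreeing with $p_d = \lfloor (d+1)/2\rfloor$ and $q_d = \lfloor d/2\rfloor$.

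For the inductive step, I would exploit the fact that multiplying by $z$ raises the $z$-degree by one while multiplying by $w_d$ does not. Suppose the formula holds for all indices less than $d$. If $d$ is odd, writing $d = 2k+1$, the inductive hypothesis gives $\deg A_{d-1} = k$ and $\deg A_{d-2} = k$, so $\deg(w_d A_{d-1}) = k$ while $\deg(z A_{d-2}) = k+1$; the second term strictly dominates, yielding $\deg A_d = k+1 = \lfloor(d+1)/2\rfloor$ unambiguously. The same comparison with $B$-polynomials for $d$ even gives $\deg B_d = k = \lfloor d/2\rfloor$ with no issue.

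The delicate cases are $\deg A_d$ for $d$ even and $\deg B_d$ for $d$ odd: here both terms in the recurrence contribute the same $z$-degree, and in principle the leading coefficients could cancel. This is the main obstacle. I would overcome it by strengthening the inductive hypothesis to track the leading coefficient as an explicit polynomial in the weights $w_1, \ldots, w_d$, viewed as formal indeterminates. A short computation for small $d$ suggests patterns such as $[z^k] A_{2k} = \sum_{j=1}^{k} w_{2j}$, and more generally the leading coefficient is a nontrivial polynomial in the $w_k$'s that is preserved as a nonzero element of $\mathbb{Q}[w_1, w_2, \ldots]$ under the Wallis recursion. Hence the formal $z$-degree of $A_d$ and $B_d$ is exactly as claimed.

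Once this bookkeeping is in place, the induction closes and the two formulas $p_d = \lfloor(d+1)/2\rfloor$, $q_d = \lfloor d/2\rfloor$ follow. This yields $p_d + q_d = d$, so the corresponding Padé index is $[p_d/q_d]$ with total order $p_d + q_d + 1 = d+1$, consistent with the $\mathcal{O}(\boldsymbol{Q}^{d+1})$ approximation error claimed earlier in the paper.
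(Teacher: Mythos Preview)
Your approach is essentially identical to the paper's: the same Wallis three-term recursion, the same base cases, and the same induction with a parity split on $d$. You are in fact more careful than the paper, which in the equal-degree cases (e.g.\ $\deg A_d$ for $d$ even) simply asserts $\deg(A_n)=m$ without addressing the possible cancellation of leading coefficients that you correctly flag and propose to resolve by tracking them as nonzero polynomials in the formal weights $w_1,\ldots,w_d$.
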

\begin{proof} 
	The \( k \)-th order asymptotic function \( \frac{A_k(\boldsymbol{z})}{B_k(\boldsymbol{z})} \) satisfies the recursive relations:
    \begin{equation*}
    \left\{
    \begin{array}{l}
    A_k(\boldsymbol{z}) = \boldsymbol{w_k} A_{k-1}(\boldsymbol{z}) + \boldsymbol{z} A_{k-2}(\boldsymbol{z}) \\
    B_k(\boldsymbol{z}) = \boldsymbol{w_k} B_{k-1}(\boldsymbol{z}) + \boldsymbol{z} B_{k-2}(\boldsymbol{z})
    \end{array}
    \right.,
    \end{equation*}
    with $$
    \left[\begin{array}{ll}
    A_{-1} & A_0 \\
    B_{-1} & B_0
    \end{array}\right]=\left[\begin{array}{ll}
    1 & 0 \\
    0 & 1
    \end{array}\right].
    $$

    Assume that for all \( k \leq n \), the following holds:
    \begin{equation*}
    \deg(A_k) = \left\lfloor \frac{k+1}{2} \right\rfloor, \quad \deg(B_k) = \left\lfloor \frac{k}{2} \right\rfloor.
    \end{equation*}
    \textbf{Base Cases:} 
    
    \( n = 1 \):
    \begin{equation*}
    \begin{gathered}
    A_1(\boldsymbol{z}) = \boldsymbol{w_1} A_0(\boldsymbol{z}) + \boldsymbol{z} A_{-1}(\boldsymbol{z}) = \boldsymbol{z} \Rightarrow \deg(A_1) = 1, \quad \left\lfloor \frac{1+1}{2} \right\rfloor = 1. \\
    B_1(\boldsymbol{z}) = \boldsymbol{w_1} B_0(\boldsymbol{z}) + \boldsymbol{z} B_{-1}(\boldsymbol{z}) = \boldsymbol{w_1} \Rightarrow \deg(B_1) = 0, \quad \left\lfloor \frac{1}{2} \right\rfloor = 0.
    \end{gathered}
    \end{equation*}
    \( n = 2 \):
    \begin{equation*}
    \begin{gathered}
    A_2(\boldsymbol{z}) = \boldsymbol{w_2} A_1(\boldsymbol{z}) + \boldsymbol{z} A_0(\boldsymbol{z}) = \boldsymbol{w_2} \boldsymbol{z} \Rightarrow \deg(A_2) = 1, \quad \left\lfloor \frac{3}{2} \right\rfloor = 1. \\
    B_2(\boldsymbol{z}) = \boldsymbol{w_2} B_1(\boldsymbol{z}) + \boldsymbol{z} B_0(\boldsymbol{z}) = \boldsymbol{w_1 w_2} + \boldsymbol{z} \Rightarrow \deg(B_2) = 1, \quad \left\lfloor \frac{2}{2} \right\rfloor = 1.
    \end{gathered}
    \end{equation*}
    
    Hence, the base cases hold.

    Then assume the statements hold for \( k = n-1 \) and \( k = n-2 \), and prove that they also hold for \( k = n \).

    \textbf{Degree of the Numerator \( A_n(\boldsymbol{z}) \):} 

    From the recurrence:
    \begin{equation*}
    A_n(\boldsymbol{z}) = \boldsymbol{w_n} A_{n-1}(\boldsymbol{z}) + \boldsymbol{z} A_{n-2}(\boldsymbol{z}),
    \end{equation*}
    and by the induction hypothesis:
    \begin{equation*}
    \deg(A_{n-1}) = \left\lfloor \frac{n}{2} \right\rfloor, \quad \deg(A_{n-2}) = \left\lfloor \frac{n-1}{2} \right\rfloor.
    \end{equation*}

    Then:
    \begin{equation*}
    \deg(\boldsymbol{w_n} A_{n-1}) = \left\lfloor \frac{n}{2} \right\rfloor, \quad \deg(\boldsymbol{z} A_{n-2}) = 1 + \left\lfloor \frac{n-1}{2} \right\rfloor.
    \end{equation*}

    Case 1: \( n \) even, let \( n = 2m \):
    \begin{equation*}
    \left\lfloor\frac{n}{2}\right\rfloor=m, \quad\left|\frac{n-1}{2}\right|=m-1.
    \end{equation*}
    Then
    \begin{equation*}
    \deg(\boldsymbol{w_n} A_{n-1}) = m, \quad \deg(\boldsymbol{z} A_{n-2}) = 1 + (m - 1) = m \\
    \Rightarrow \deg(A_n)  = m = \left\lfloor \frac{n + 1}{2} \right\rfloor.
    \end{equation*}

    Case 2: \( n \) odd, let \( n = 2m + 1 \):
    \begin{equation*}
    \left\lfloor\frac{n}{2}\right\rfloor=m, \quad\left|\frac{n-1}{2}\right|=m.
    \end{equation*}
    Then
    \begin{equation*}
    \deg(\boldsymbol{w_n} A_{n-1}) = m, \quad \deg(\boldsymbol{z} A_{n-2}) = 1 + m = m + 1 \\
    \Rightarrow \deg(A_n)  = m + 1 = \left\lfloor \frac{n + 1}{2} \right\rfloor.
    \end{equation*}

    \textbf{Degree of the Denominator \( B_n(\boldsymbol{z}) \):}

    From the recurrence:
    \begin{equation*}
    B_n(\boldsymbol{z}) = \boldsymbol{w_n} B_{n-1}(\boldsymbol{z}) + \boldsymbol{z} B_{n-2}(\boldsymbol{z}),
    \end{equation*}
    and using the induction hypothesis:
    \begin{equation*}
    \deg(B_{n-1}) = \left\lfloor \frac{n-1}{2} \right\rfloor, \quad \deg(B_{n-2}) = \left\lfloor \frac{n-2}{2} \right\rfloor.
    \end{equation*}

    Then:
    \begin{equation*}
    \deg(\boldsymbol{w_n} B_{n-1}) = \left\lfloor \frac{n-1}{2} \right\rfloor, \quad \deg(\boldsymbol{z} B_{n-2}) = 1 + \left\lfloor \frac{n-2}{2} \right\rfloor.
    \end{equation*}

    Case 1: \( n = 2m \) (even):
    \begin{equation*}
    \left\lfloor\frac{n-1}{2}\right\rfloor=m-1, \quad\left\lfloor\frac{n-2}{2}\right\rfloor=m-1.
    \end{equation*}
    Then
    \begin{equation*}
    \deg(\boldsymbol{w_n} B_{n-1}) = m - 1, \quad \deg(\boldsymbol{z} B_{n-2}) = m \\
    \Rightarrow \deg(B_n) = m = \left\lfloor \frac{n}{2} \right\rfloor.
    \end{equation*}

    Case 2: \( n = 2m + 1 \) (odd):
    \begin{equation*}
    \left\lfloor\frac{n-1}{2}\right\rfloor=m, \quad\left\lfloor\frac{n-2}{2}\right\rfloor=m-1.
    \end{equation*}
    Then
    \begin{equation*}
    \deg(\boldsymbol{w_n} B_{n-1}) = m, \quad \deg(\boldsymbol{z} B_{n-2}) = m \\
    \Rightarrow \deg(B_n) = m = \left\lfloor \frac{n}{2} \right\rfloor.
    \end{equation*}

    By mathematical induction, we conclude that:
    \begin{equation*}
    \deg(A_n) = \left\lfloor \frac{n+1}{2} \right\rfloor, \quad \deg(B_n) = \left\lfloor \frac{n}{2} \right\rfloor.
    \end{equation*}
    Therefore, when the truncation order is \( n = d \), we have
    \begin{equation*}
    p_d = \left\lfloor \frac{d+1}{2} \right\rfloor, \quad q_d = \left\lfloor \frac{d}{2} \right\rfloor.
    \end{equation*}
\end{proof}

\begin{myTheo} 
    The \( d \)-th order truncation of the continued fraction \( R_d(\boldsymbol{z}) = \frac{A_d(\boldsymbol{z})}{B_d(\boldsymbol{z})} \) naturally satisfies the conditions for a Padé approximant, specifically:
    \begin{equation}
    f(\boldsymbol{z}) - R_d(\boldsymbol{z}) = \mathcal{O}(\boldsymbol{z}^{p_d + q_d + 1}),
    \end{equation}
    which means that its Taylor expansion coincides with the first \( d \) terms of the original function \( f(\boldsymbol{z}) \). 
\end{myTheo}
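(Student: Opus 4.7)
The plan is to exploit the classical cross-difference identity for successive convergents of a continued fraction, telescope the resulting formula, and apply Theorem~3 to match the degree condition in Definition~1. The argument lives naturally in the formal power series ring $\mathbb{R}[[\boldsymbol{z}]]$, where each convergent $R_k$ is a genuine power series as soon as $B_k(0) \neq 0$.

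The key algebraic step is to prove, by induction on $k$,
\[ A_k(\boldsymbol{z}) B_{k-1}(\boldsymbol{z}) - A_{k-1}(\boldsymbol{z}) B_k(\boldsymbol{z}) = (-1)^{k+1} \boldsymbol{z}^k, \]
using the two linear recurrences of Theorem~3 together with the initial conditions $A_{-1}=1,\, A_0=0,\, B_{-1}=0,\, B_0=1$. The base cases $k=0,1$ follow by direct substitution; in the inductive step, substituting $A_k = \boldsymbol{w_k} A_{k-1} + \boldsymbol{z} A_{k-2}$ and $B_k = \boldsymbol{w_k} B_{k-1} + \boldsymbol{z} B_{k-2}$ causes the $\boldsymbol{w_k}$-cross-terms to cancel, leaving $\boldsymbol{z}\,\bigl(A_{k-2} B_{k-1} - A_{k-1} B_{k-2}\bigr)$, which by the inductive hypothesis equals $(-1)^{k+1} \boldsymbol{z}^k$ after the sign flip.

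Dividing this identity by $B_k(\boldsymbol{z}) B_{k-1}(\boldsymbol{z})$ yields
\[ R_k(\boldsymbol{z}) - R_{k-1}(\boldsymbol{z}) = \frac{(-1)^{k+1} \boldsymbol{z}^k}{B_k(\boldsymbol{z}) B_{k-1}(\boldsymbol{z})}. \]
Evaluating the denominator recurrence at $\boldsymbol{z}=0$ gives $B_k(0) = \boldsymbol{w_1} \boldsymbol{w_2} \cdots \boldsymbol{w_k}$, which is non-zero whenever all weights are, so $B_k B_{k-1}$ is a unit in $\mathbb{R}[[\boldsymbol{z}]]$ and $R_k - R_{k-1} = \mathcal{O}(\boldsymbol{z}^k)$. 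Because the differences vanish to successively higher orders, the sequence $\{R_k\}$ is Cauchy in the $\boldsymbol{z}$-adic topology, and its formal limit $f(\boldsymbol{z}) := \lim_{k \to \infty} R_k$ exists in $\mathbb{R}[[\boldsymbol{z}]]$. Telescoping then produces
\[ f(\boldsymbol{z}) - R_d(\boldsymbol{z}) = \sum_{k=d+1}^{\infty} \bigl(R_k - R_{k-1}\bigr) = \mathcal{O}(\boldsymbol{z}^{d+1}). \]
Invoking Theorem~3 gives $p_d + q_d = \lfloor(d+1)/2\rfloor + \lfloor d/2 \rfloor = d$, so the right-hand side is exactly $\mathcal{O}(\boldsymbol{z}^{p_d + q_d + 1})$; multiplying through by $B_d(\boldsymbol{z})$ recovers the Padé condition $B_d f - A_d = \mathcal{O}(\boldsymbol{z}^{p_d + q_d + 1})$ stated in Definition~1.

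The hard part will not be the algebra but pinning down what ``the original function $f$'' means when no analytic definition is stipulated; the cleanest resolution is to \emph{define} $f$ as the formal-series limit above so that the claim becomes an identity in $\mathbb{R}[[\boldsymbol{z}]]$. An equivalent route avoids limits entirely by interpreting the theorem as the statement that $R_d$ Padé-approximates the formal Taylor series associated with the infinite continued fraction, in which case the cross-difference identity delivers the result directly. Either way, the only nontrivial assumption required is that the weights $\boldsymbol{w_i}$ do not vanish, guaranteeing $B_k(0) \neq 0$ throughout.
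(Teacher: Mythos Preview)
Your argument is correct and follows a genuinely different classical route from the paper. The paper proves by induction that $f(\boldsymbol{z})B_k(\boldsymbol{z})-A_k(\boldsymbol{z})=(-1)^k\boldsymbol{z}^{k+1}S_k(\boldsymbol{z})$ with $S_k(0)\neq 0$, working with $f$ from the outset and using the recurrences for $A_k,B_k$ to push the residual one order higher at each step; it then divides by $B_d$ and invokes Theorem~3 exactly as you do. You instead establish the determinantal identity $A_kB_{k-1}-A_{k-1}B_k=(-1)^{k+1}\boldsymbol{z}^k$, which does not mention $f$ at all, and only afterwards build $f$ as the $\boldsymbol{z}$-adic limit of the convergents and telescope. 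Your route has the merit of making the existence and meaning of $f$ explicit (the paper's base case simply writes $f(\boldsymbol{z})=\boldsymbol{z}/(a_1+\cdots)$ without saying in what sense the infinite fraction converges), while the paper's route is marginally shorter once $f$ is granted and additionally yields the sharper statement that the leading error coefficient is non-zero (via $S_k(0)\neq 0$). Both proofs need the same hypothesis $\boldsymbol{w}_i\neq 0$ to ensure $B_k(0)=\boldsymbol{w}_1\cdots\boldsymbol{w}_k\neq 0$, and both close with the identical appeal to Theorem~3 for $p_d+q_d=d$.
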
 
\begin{proof}
    \begin{myDef} [Padé Approximant~\cite{baker1961pade, lorentzen2010pade}] Let $ C(z) = \sum_{k=0}^{\infty} c_k z^k $ be a formal power series in the variable $z$, then the Padé approximant of order $[L/M]$ is a rational function of the form:
        \begin{equation}
        R_{L,M}(z)=[A_{L,M}(z)] / [B_{L,M}(z)],
        \end{equation}
    where $A_{L,M}(z)$ and $B_{L,M}(z)$ are polynomials of degrees at most $L$ and $M$, respectively, chosen such that
        \begin{equation}
        B_{L,M}(z) C(z) - A_{L,M}(z) = \mathcal{O}(z^{L+M+1}),
        \end{equation}
        where notation $\mathcal{O}(z^{k})$ denotes some power series of the form $\sum_{n=k}^{\infty} \tilde{c}_n z^n$.
    This approximation minimizes the difference between the rational function and the power series up to the order $L+M$.
    \end{myDef}
    
    Since \( R_d(\boldsymbol{z}) = \frac{A_d(\boldsymbol{z})}{B_d(\boldsymbol{z})} \), we have
    \begin{equation}
    f(\boldsymbol{z}) - \frac{A_d(\boldsymbol{z})}{B_d(\boldsymbol{z})} = \mathcal{O}(\boldsymbol{z}^{p_d + q_d + 1}),
    \end{equation}
    which implies
    \begin{equation}
    f(\boldsymbol{z}) B_d(\boldsymbol{z})-A_d(\boldsymbol{z})=\mathcal{O}\left(\boldsymbol{z}^{p_d+q_d+1}\right) B_d(\boldsymbol{z})=\mathcal{O}\left(\boldsymbol{z}^{p_d+q_d+1}\right) .
    \end{equation}
    \begin{lemma}
        For all \( k \geq 0 \), there exists a polynomial \( S_k(\boldsymbol{z}) \) such that:
        \begin{equation}
        f(\boldsymbol{z}) B_k(\boldsymbol{z}) - A_k(\boldsymbol{z}) = (-1)^k \boldsymbol{z}^{k+1} S_k(\boldsymbol{z}),
        \end{equation}
        and the constant term \( S_k(0) \neq 0\) .
    \end{lemma}
    \begin{proof}     
        We can prove this lemma by mathematical induction.
        
        \textbf{Base Case:}
        
        \( k = 0 \):
        \[
        f(\boldsymbol{z}) B_0(\boldsymbol{z}) - A_0(\boldsymbol{z}) = f(\boldsymbol{z}) \cdot 1 - 0 = f(\boldsymbol{z}).
        \]
        By the definition of continued fractions, \( f(\boldsymbol{z}) = \frac{\boldsymbol{z}}{a_1 + \cdots} \), so:
        \[
        f(\boldsymbol{z}) = \boldsymbol{z} \cdot (\text{analytic function}) = \boldsymbol{z} S_0(\boldsymbol{z}), \quad S_0(0) = \frac{1}{a_1} \neq 0.
        \]
        
        \textbf{Inductive Hypothesis (\( k-1 \) and \( k-2 \) hold):}
        \begin{equation*}
        \begin{gathered}
        f(\boldsymbol{z}) B_{k-1}(\boldsymbol{z})-A_{k-1}(\boldsymbol{z})=(-1)^{k-1} \boldsymbol{z}^k S_{k-1}(\boldsymbol{z}) \\
        f(\boldsymbol{z}) B_{k-2}(\boldsymbol{z})-A_{k-2}(\boldsymbol{z})=(-1)^{k-2} \boldsymbol{z}^{k-1} S_{k-2}(\boldsymbol{z})
        \end{gathered}.
        \end{equation*}
        \textbf{for \(k\):}
        
        Substituting the recurrence relations:
        \begin{equation*}
        \begin{aligned}
        f(\boldsymbol{z}) B_k(\boldsymbol{z})-A_k(\boldsymbol{z}) & =f(\boldsymbol{z})\left(\boldsymbol{w_k} B_{k-1}(\boldsymbol{z})+\boldsymbol{z} B_{k-2}(\boldsymbol{z})\right)-\left(\boldsymbol{w_k} A_{k-1}(\boldsymbol{z})+\boldsymbol{z} A_{k-2}(\boldsymbol{z})\right) \\
        & =\boldsymbol{w_k}\left(f(\boldsymbol{z}) B_{k-1}(\boldsymbol{z})-A_{k-1}(\boldsymbol{z})\right)+\boldsymbol{z}\left(f(\boldsymbol{z}) B_{k-2}(\boldsymbol{z})-A_{k-2}(\boldsymbol{z})\right) \\
        & =\boldsymbol{w_k}(-1)^{k-1} \boldsymbol{z}^k S_{k-1}(\boldsymbol{z})+\boldsymbol{z}(-1)^{k-2} \boldsymbol{z}^{k-1} S_{k-2}(\boldsymbol{z}) \\
        & =(-1)^k \boldsymbol{z}^k\left(-\boldsymbol{w_k} S_{k-1}(\boldsymbol{z})+S_{k-2}(\boldsymbol{z})\right) \\
        & =(-1)^k \boldsymbol{z}^{k+1} S_k(\boldsymbol{z}),
        \end{aligned}
        \end{equation*}
        where \( S_k(\boldsymbol{z}) \) is the polynomial obtained from \( -\boldsymbol{w_k} S_{k-1}(\boldsymbol{z}) + S_{k-2}(\boldsymbol{z}) \), divided by \( \boldsymbol{z} \).
    \end{proof}

    From the lemma, we have:
    \[
    f(\boldsymbol{z}) - \frac{A_d(\boldsymbol{z})}{B_d(\boldsymbol{z})} = \frac{f(\boldsymbol{z}) B_d(\boldsymbol{z}) - A_d(\boldsymbol{z})}{B_d(\boldsymbol{z})} = \frac{(-1)^d \boldsymbol{z}^{d+1} S_d(\boldsymbol{z})}{B_d(\boldsymbol{z})}.
    \]
    
    Since \( B_d(0) =\boldsymbol {w_1 w_2} \cdots \boldsymbol{w_d} \neq 0 \) (assuming \( w_k \neq 0 \)) and \( S_d(0) \neq 0 \), it follows that:
    \[
    f(\boldsymbol{z}) - R_d(\boldsymbol{z}) = \mathcal{O}(\boldsymbol{z}^{d+1}).
    \]
    
    According to \textit{Theorem~\ref{Theo3}}, the degree of the numerator \( A_d(\boldsymbol{z}) \) is \( p_d = \left\lfloor \frac{d+1}{2} \right\rfloor \), and the degree of the denominator \( B_d(\boldsymbol{z}) \) is \( q_d = \left\lfloor \frac{d}{2} \right\rfloor \). When \( d \) is odd, we have \( p_d = q_d = \frac{d}{2} \); when \( d \) is even, \( p_d = \frac{d+1}{2} \), \( q_d = \frac{d-1}{2} \). In both cases, it follows that \( p_d + q_d = d \). Therefore,
    \[
    f(\boldsymbol{z}) - R_d(\boldsymbol{z}) = \mathcal{O}\left(\boldsymbol{z}^{d+1}\right) = \mathcal{O}\left(\boldsymbol{z}^{p_d + q_d + 1}\right),
    \]
    which satisfies the condition of a Padé approximant. 
\end{proof}

In conclusion, the depth-$d$ continued fraction network represents the $d$-th order truncation of the continued fraction:
\begin{equation*}
\frac{1}{\boldsymbol{w_1} \boldsymbol{Q}+} \frac{1}{\boldsymbol{w_2} \boldsymbol{Q}+} \cdots \frac{1}{\boldsymbol{w_d} \boldsymbol{Q}},
\end{equation*}
which forms a \([p_d, q_d]\)-Padé approximant with \( p_d + q_d = d \). This enables accurate representation of the first $d$-th order interactions among agents.

\section{Experimental Details}\label{ExperimentalDetails}
\subsection{Algorithmic Description}\label{algorithm}

\begin{algorithm}[H]
\caption{Continued Fraction Q-Learning}\label{alg:example}
\begin{algorithmic}[1]  %
\State Initialize environment, agent network $Q_i\left(\tau_i, u_i ; \theta\right)$, target network $Q_i\left(\tau_i^{\prime}, u_i^{\prime} ; \hat{\theta}\right)$, mixing network $Q_{tot}$, and VIB module $G_\phi(E_{\phi_1},D_{\phi_2})$
\State Initialize replay buffer $\mathcal{D}$
\Repeat
    \State Obtain the initial global state $s^0$
    \For{$t=0$ to $T-1$} 
        \State For each agent i, get action-observation history $\tau_i^t$
        \State Calculate individual value function $Q_i$
        \State Get the hidden state $h_i^t$
        \State Select action $u_i^t$ via value function with probability $\epsilon$ exploration
        \State Execute joint action $\boldsymbol{u^t}$, receive reward $r^t$, next state $s^{t+1}$
    \EndFor
    \State Store the episode trajectory in $\mathcal{D}$
    \State Sample a mini-batch $\mathcal{B}$ of size $b$ from $\mathcal{D}$
    \For{$t=0$ to $T-1$}
	\State Calculate $\mu,\sigma$=$E_{\phi_1}(h_i^t)$ 
        \State Generate assistive information $\boldsymbol{m}$
        \State Get the attention weight $\alpha_k$ by the intervention function in Eq.~\ref{Eq12}
        \State Calculate the joint value function $Q_{tot}$
    \EndFor
    \State Calculate loss $\mathcal{L}(\theta)=\mathcal{L}_{Q_{tot}}+\mathcal{L}_{VIB}$ via Eq.~\ref{Eq11} and Eq.~\ref{Eq14}. 
    \State Update $\phi$ and $\theta$ by minimizing the above loss
    \State Periodically update $\hat{\theta} \leftarrow \theta$
\Until{$Q_i\left(\tau_i, u_i ; \theta\right)$ converges or maximum steps reached}
\end{algorithmic}
\end{algorithm}

\subsection{LBF Description and Hyperparameters Settings}
\begin{wraptable}{r}{0.4\textwidth}
  \centering
  \vspace{-17pt}  
  \caption{The configurations of LBF.}
  \label{table1}
	\begin{tabular}{cc}
		\hline
		\hline
        Hyperparameter &Value \\
        \hline
        Max Episode Length &50\\
        Batch Size&32\\
        Test Interval&10000\\
        Test Episodes &32\\
        Replay Batch Size &5000\\
        Discount Factor&0.99\\
        Start Exploration Rate & 1.0\\
        End Exploration Rate & 0.05\\
        Anneal Steps&50000\\
        Steps&1M\\
        Target Update Interval&200\\
		\hline
		\hline
	\end{tabular}
\end{wraptable} 
Level-Based Foraging~(LBF)~\cite{christianos2020shared} is a mixed cooperative-competitive MARL benchmark, where each agent navigates a $10 \times 10$ grid world.
Agents and food items are randomly placed in a 2D grid, and each one is assigned a level.
A food item can only be collected when the combined levels of all participating agents equal or exceed its level. 
The environment induces a spectrum of collaborative behaviors through its level-dependent reward structure: while low-level food items permit independent collection, higher-level resources necessitate coalition formation.
Furthermore, we set the penalty reward for movement to $-0.002$, and the detailed hyperparameter settings of LBF are shown in Table~\ref{table1}.

\textbf{Observation Space.}
Each agent observes a $2 \times 2$ square grid centered on its own position. Within this range, the agent receives a structured array containing the $(x, y)$ coordinates and levels of all visible food items and other agents. This observation provides both spatial and attribute-level information to support localized decision-making.

\textbf{Action Space.}  
The discrete action space for each agent consists of none, move~[direction], and load food. 
Each agent only moves into one unoccupied grid. If multiple agents attempt to move into the same grid, collisions are resolved by canceling the conflicting moves, leaving the agents in their original positions.

\textbf{Rewards.}
This reward depends on the food's level, which is distributed among the participating agents in proportion based on their levels.
The rewards are normalized to maintain a unit sum across all agents. 
This design ensures contribution-based fairness in reward distribution while enhancing cooperative efficiency among agents.

\subsection{StarCraft II Description and Hyperparameters Settings}
\begin{wraptable}{r}{0.4\textwidth}
  \centering
  \vspace{-18pt}  
  \caption{The configurations of SMAC.}
  \label{table2}
	\begin{tabular}{cc}
        \hline
        \hline
        Hyperparameter &Value \\
        \hline
        Difficulty&7\\
        Batch Size&32\\
        Test Interval&10000\\
        Test Episodes &32\\
        Replay Batch Size &5000\\
        Discount Factor&0.99\\
        Start Exploration Rate & 1.0\\
        End Exploration Rate & 0.05\\
        Target Update Interval&200\\
        Optimizer&RMSprop\\
        Learning Rate&0.0005\\
        \hline
        \hline
    \end{tabular}
\end{wraptable}
All implementations of algorithms are conducted on StarCraft II unit micro-management tasks (SC2.4.10). 
We evaluate performance in combat scenarios where enemy units are controlled by the built-in AI with the~\textit{difficulty=7} setting, and each allied unit is controlled by the decentralized agents with reinforcement learning algorithms.
During battles, the agents seek to maximize the damage dealt to enemy units while minimizing damage received, requiring the coordination of diverse tactical skills.
We assess our method across a variety of challenging scenarios that differ in terms of symmetry, agent composition, and unit count (as shown in Table~\ref{table3}). For clarity, we also outline the core settings of the StarCraft Multi-Agent Challenge (SMAC)~\cite{samvelyan2019starcraft}, including observation, state, action, and reward configurations. The detailed hyperparameter settings of SMAC are shown in Table~\ref{table2}.

\textbf{Observations and States.} At each time step, each agent receives a local observation of units within its field of view. The observation includes the following features for both allied and enemy units: distance, relative X and Y positions, health, shield, and unit type.
Note that the agents can only observe the others if they are alive and within their line of sight range, which is set to $9$.
When a unit~(ally or enemy) becomes invisible or is eliminated, its feature vector is reset to all zeros, indicating either death or being outside the field of view.
The global state is only available during centralized training, which contains information about all units on the map.
Finally, all features, including the global state and the observation of the agent, are normalized by their maximum values.

\textbf{Action Space.} Each unit takes an action from the discrete action set: no-op, stop, move~[direction], and attack~[enemy id].
Agents are allowed to move with a fixed movement amount in four directions: north, south, east, and west, where the unit is allowed to take the attack~[enemy id] action only when the enemy is within its shooting range.

\textbf{Rewards.}
The target goal is to maximize the win rate for each battle scenario.
At each time step, the agents receive a shaped reward based on the hit-point damage dealt and enemy units killed, as well as a special bonus for winning the battle.
Additionally, agents obtain a $10$ positive bonus after killing each enemy and a $200$ bonus when killing all enemies, which is consistent with the default reward function of the SMAC. 

\begin{table}[H]
	\caption{The StarCraft Multi-Agent Challenge benchmark.}\label{table3}
	\centering
	\fontsize{9}{12}\selectfont 
	\begin{tabular}{ccccccc}
		\hline
		\hline
        Map &Ally Units & Enemy Units &Difficulty&Steps&Anneal Steps &$d$\\
                \hline
        2s3z & 2 Stalkers, 3 Zealots& 2 Stalkers, 3 Zealots&Eazy&1.5M&50000&2\\
        2c\_vs\_64zg&2 Colossus&64 Zerglings&Hard&2M&50000&2\\
        3s\_vs\_5z&3 Stalkers&5 Zealots&Hard&2M&50000&2\\
    5m\_vs\_6m&5 Marines &6 Marines &Hard&2M&50000&4\\
    3s5z\_vs\_3s6z& 3 Stalkers, 5 Zealots & 3 Stalkers, 6 Zealots&Super Hard&5M&200000&6\\
    6h\_vs\_8z&6 Hydralisks &8 Zealots&Super Hard&5M&200000&6\\
		\hline
		\hline
	\end{tabular}
\end{table}

\newpage
\subsection{SMACv2 Description and Hyperparameters Settings}
SMACv2~\cite{ellis2023smacv2} is an enhanced benchmark for cooperative multi-agent reinforcement learning built on top of StarCraft II. 
It preserves the original SMAC API while introducing three procedural innovations to increase scenario diversity and challenge contemporary MARL algorithms: randomising start positions, randomising unit types, and changing the unit sight and attack ranges.

\begin{wraptable}{r}{0.4\textwidth}
  \centering
  \vspace{0pt}  
  \caption{The configurations of SMACv2.}
  \label{table4}
        \begin{tabular}{ccc}
        \hline
        \hline
       Race &Unit&probability \\
        \hline
                &Stalker&0.45\\
        Protoss&Zealot&0.45\\
                &Colossus&0.1\\
        \hline
                &Marine&0.45\\
        Terran&Marauder&0.45\\
                &Medivac&0.1\\
        \hline
                &Zergling&0.45\\
        Zerg&Hydralisk&0.45\\
                &Baneling&0.1\\
        \hline
        \hline
        \end{tabular}
\end{wraptable}
\textbf{Randomized Start Positions.}
Allied and enemy units are spawned either in a "surround" configuration, where enemies encircle the allies, or via a "reflect" scheme that mirrors allied positions across the map center. This ensures that agents cannot overfit to fixed spawn patterns.

\textbf{Randomized Unit Types.}
Each battle can feature mixed unit compositions rather than uniform rosters. For Terran, Protoss, and Zerg, three unit types are sampled with configurable probabilities through the team\_gen distribution~(as shown in Table~\ref{table4}), promoting adaptable strategies under varied team makeups.

\textbf{Unit Sight and Attack Ranges.}
Unit vision cones and attack radii are aligned with their true in-game values, increasing realism and preventing agents from exploiting the simplified ranges used in SMAC.

\subsection{Implementation Details} \label{HyperparametersSettings}
We compare our method against nine value-based baselines, including VDN~\cite{sunehag2017value}, QMIX~\cite{rashid2018qmix}, QPLEX~\cite{wang2020qplex}, Centrally-Weighted QMIX (CW-QMIX)~\cite{rashid2020weighted}, CDS~\cite{li2021celebrating}, SHAQ~\cite{wang2022shaq}, GoMARL~\cite{zang2023automatic}, ReBorn~\cite{qin2024dormant}, and NA$^2$Q~\cite{liu2023n}. To ensure fairness, we implement all experiments within the PyMARL framework~\footnote{The source code of implementations is from \url{https://github.com/oxwhirl/pymarl}.}. All hyperparameters of baselines are set identically to our method to compare algorithms fairly. Please refer to PyMARL's open-source implementation for further training details and fair comparison settings. The depth $d$ of CFN is determined based on the scale of agents and the complexity of each task.

All scenarios are trained on a system equipped with an NVIDIA RTX 3080TI GPU and an Intel i9-12900k CPU, with training time ranging from 1 to 16 hours per scenario, depending on the task complexity and episode length.

\newpage
\subsection{Detailed Description of CFN Structure}\label{CFN}
As illustrated in \textit{Fig.\ref{fig11}}, the CFN framework includes two structural variants in addition to the main architecture.
\textit{Fig.~\ref{fig11a}} presents CFN-C, a composite architecture inspired by CoFrNet~\cite{puri2021cofrnets}, which combines two types of ladders: single-feature ladders, each processing an individual agent utility $Q_i$, and full-input ladders, which receive the complete utility vector $\boldsymbol{Q}$ at every layer.
Each ladder yields a partial joint value $\hat{Q}_k$, and the aggregation of all ladders constitutes the final joint Q-value.

The number of single-feature ladders equals the number of agents, enabling additive modeling of individual effects.
In contrast, full-input ladders are deeper and designed to capture complex joint dependencies among agents by recursively combining all inputs, thereby facilitating high-order interaction modeling.

\textit{Fig.~\ref{fig11b}} and \textit{Fig.~\ref{fig11c}} depict two simplified variants:
1) CFN-D, which retains only the single-feature ladders, thereby modeling additive effects with strong transparency~\cite{hastie1990generalized} but lacking the capacity to express inter-agent interactions;
2) CFN, which retains only the full-input ladders, striking a balance between modeling power and computational efficiency.

In our QCoFr algorithm, we adopt the CFN structure with full-input ladders as the default architecture.
Compared to CFN-C, this version significantly reduces parameter overhead while preserving the ability to capture arbitrary-order interactions.
We further include CFN-D as an ablation baseline to isolate the contribution of high-order modeling: while CFN-D offers interpretability due to its decomposable additive form, its inability to encode dependencies across agents limits its expressiveness in cooperative settings.

Finally, a key advantage of the CFN structure is its linear scalability: the number of parameters grows as $\mathcal{O}(n)$ with the number of agents, making it particularly suited to large-scale MARL scenarios where modeling expressive joint behavior is critical without incurring prohibitive computational costs.
\begin{figure}[tb]
	\centering 
	\subfigbottomskip=0pt 
	\subfigcapskip=-2pt 
	\setlength{\abovecaptionskip}{1mm}
	\setlength{\belowcaptionskip}{-2mm}
	\subfigure[CFN-C]{ \label{fig11a}
		\includegraphics[height=4.8cm]{./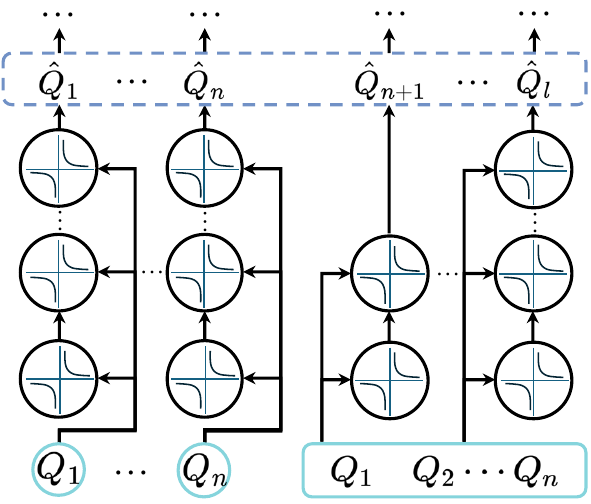}}	
        \hfill
	\subfigure[CFN-D]{ \label{fig11b}
		\includegraphics[height=4.8cm]{./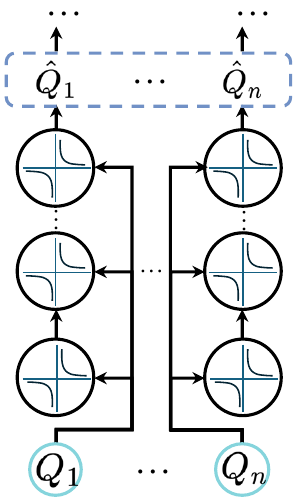}}
        \hfill
	\subfigure[CFN]{ \label{fig11c}
		\includegraphics[height=4.8cm]{./pic/qcofr/cfn_c.pdf}}   
        
	\caption{Three variants of the CFN architecture. (a) CFN-C integrates both single-feature ladders, where each ladder processes a single input dimension $Q_i$, and full ladders, which take the complete set of individual values $\boldsymbol{Q}$ as input. (b) CFN-D utilizes only the single-feature ladders. (c) CFN employs only the full ladders with increasing depth.}
    \label{fig11}
\end{figure}

\newpage
\section{Extended Interpretability Analysis}
\begin{figure}[tb]
\centering
    \subfigbottomskip=0pt 
	\subfigcapskip=-2pt 
	\setlength{\abovecaptionskip}{2mm}
	\setlength{\belowcaptionskip}{-5mm}
    \subfigure[Actions of each agent.]{
      \includegraphics[width=\textwidth]{./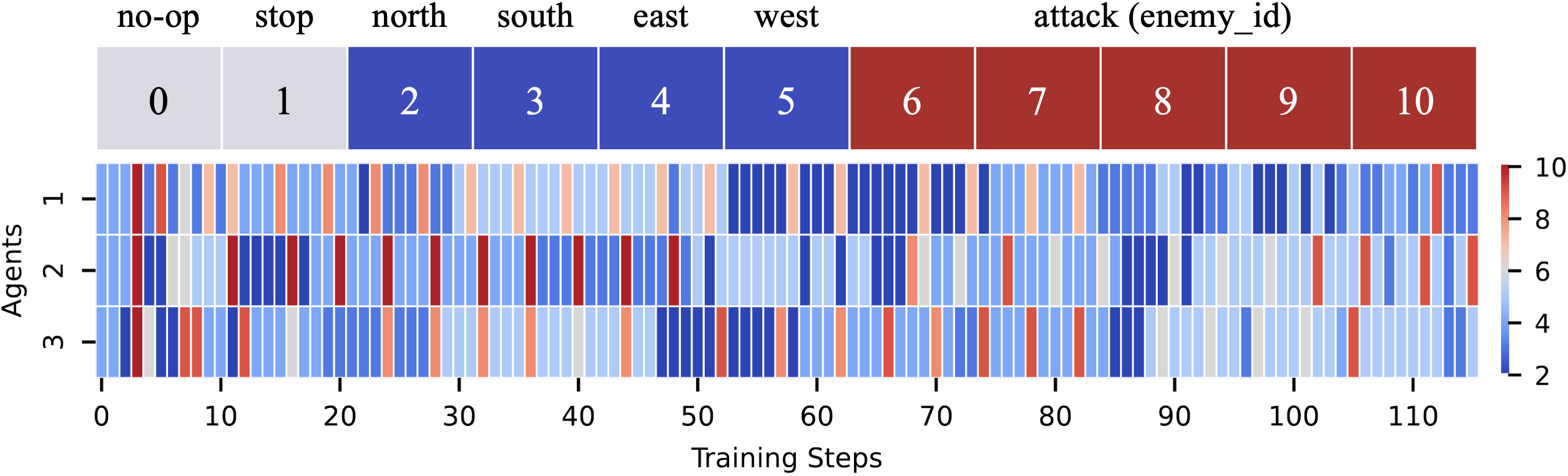}}
    \subfigure[$t=25$]{
      \includegraphics[width=0.48\textwidth]{./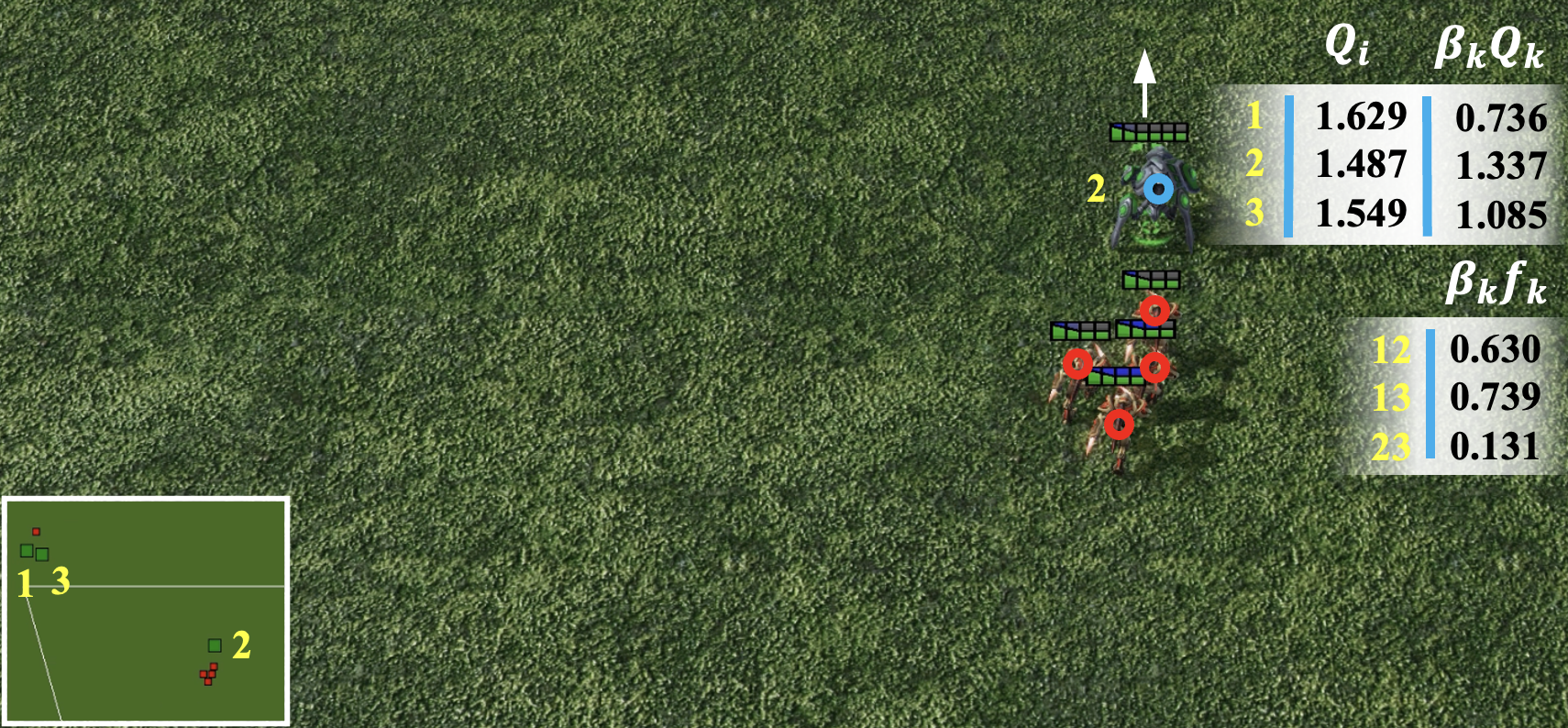}}
      \hfill
    \subfigure[$t=75$]{
      \includegraphics[width=0.48\textwidth]{./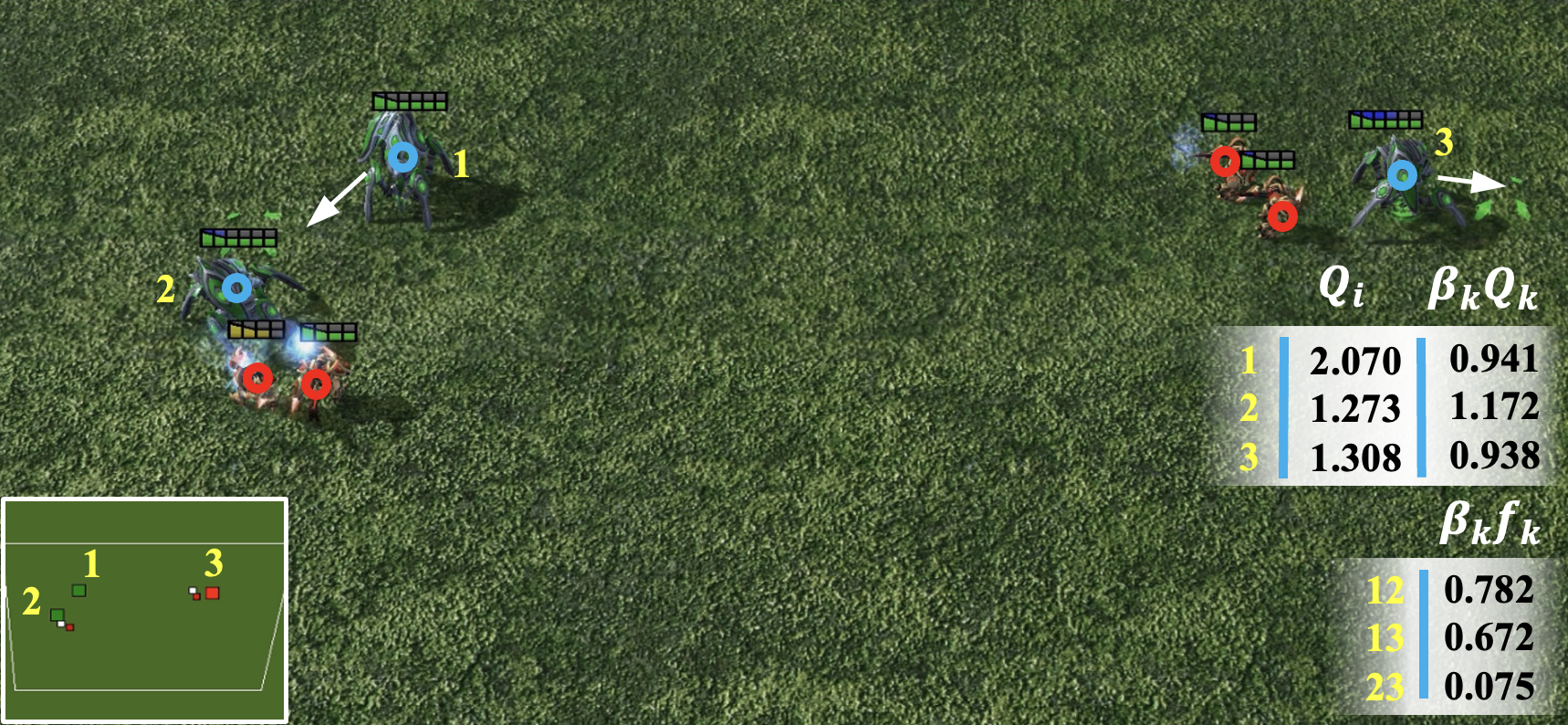}
    }
    \subfigure[$t=85$]{
      \includegraphics[width=0.48\textwidth]{./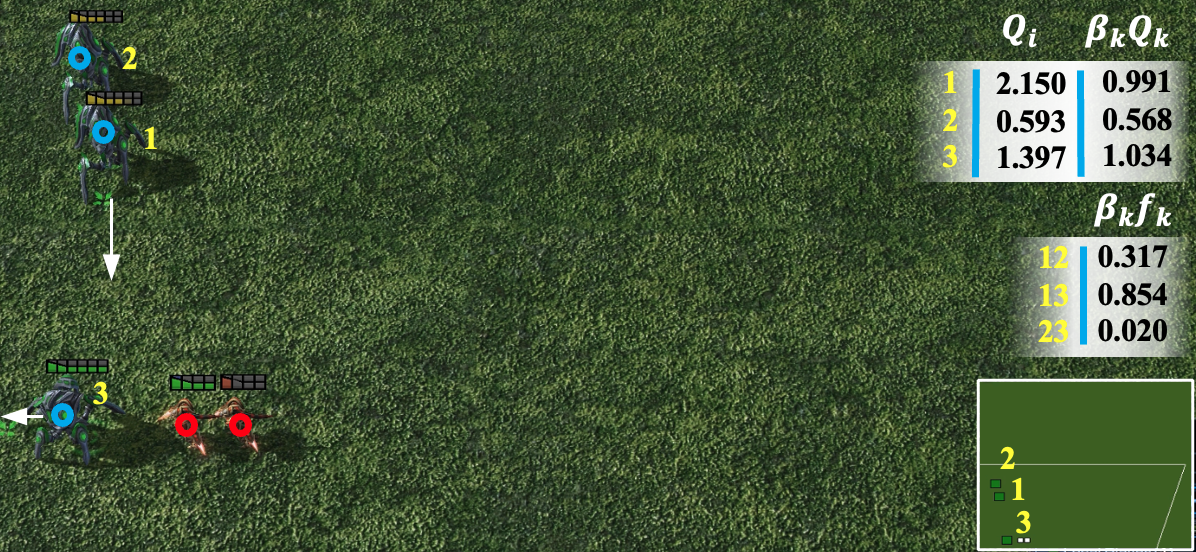}}
    \hfill
    \subfigure[$t=100$]{
      \includegraphics[width=0.48\textwidth]{./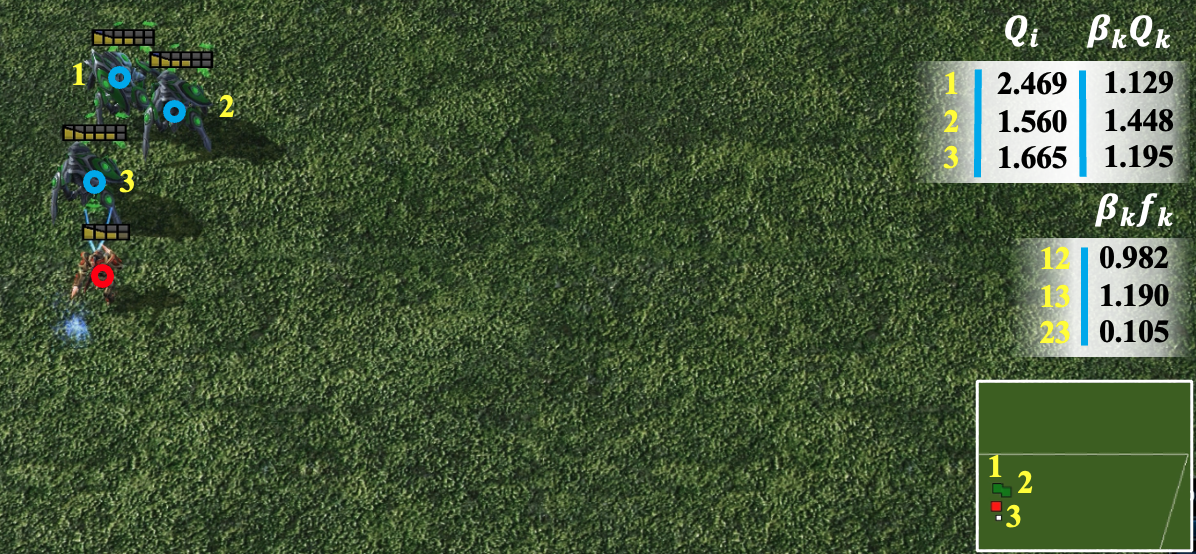}
    }

\caption{Visualization of evaluation results for QCoFr on \textit{3s\_vs\_5z} map. Agents demonstrate a coordinated kite-and-focus-fire strategy: agent 2 initially kites four enemies alone, while agent 1 and agent 3 eliminate another. Agent 3 then draws away two of the remaining enemies, enabling agent 1 and agent 2 to dispatch the others. Finally, all agents regroup to defeat the last enemies. }
\label{fig12}
\end{figure}

\textit{Fig.~\ref{fig12}} illustrates the interpretability of QCoFr on \textit{3s\_vs\_5z} scenario.
At the beginning of the episode, agent 2 independently kites four enemies, creating a numerical advantage that enables agent 1 and agent 3 to quickly eliminate an isolated opponent.
As a result, agent 2 receives the highest individual contribution score (1.337), while the strongest pairwise contribution is observed between agents 1 and 3 due to their effective coordination.
As the engagement progresses, agent 3 draws two enemies away, allowing agent 1 and agent 2 to jointly take down the remaining targets. 
During this phase, the coalition contribution of agents 1 and 2 increases, and agent 3's individual contribution also rises as it delays the enemy. 
After these enemies are defeated, all three agents regroup to focus fire on the remaining units, resulting in a more balanced distribution of credit across agents.
This case study demonstrates that the agents have learned a kite-and-focus-fire strategy.
The alignment between observed behaviors and quantitative contribution values confirms the interpretability of QCoFr, which faithfully attributes both individual and coalition-level contributions with high-order interactions in executing complex cooperative tactics.

\newpage
\section{Additional Experiments on SMAC}
\subsection{Additional Ablation Experiments}\label{Ablation}

\textbf{The Role of the VIB Module.}
We ablate the VIB component on three additional SMAC scenarios (\textit{Fig.~\ref{fig13}}), comparing QCoFr with and without VIB under identical settings.
With VIB, QCoFr consistently accelerates early learning and achieves higher test win rates, confirming that task-relevant assistive information improves credit assignment and coordination.

\begin{figure}[tb]
	\setlength{\abovecaptionskip}{-3mm}
	\setlength{\belowcaptionskip}{-2mm}
	\includegraphics[width=1\textwidth]{./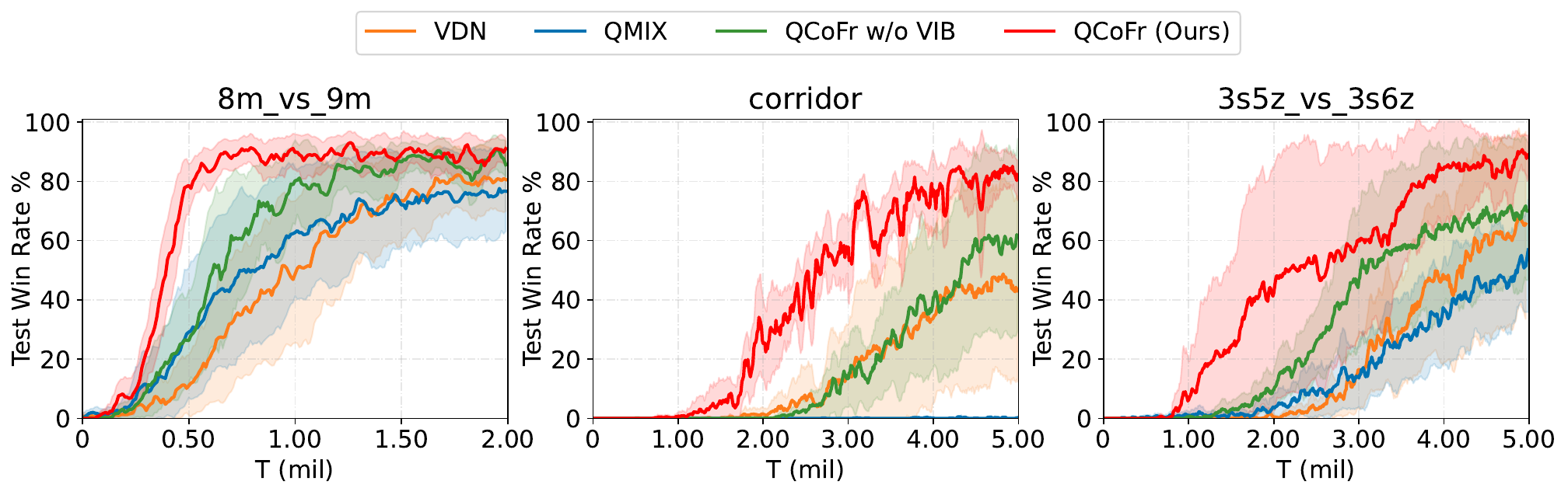}
	\caption{Performance with and without VIB on three extra scenarios of the SMAC benchmark.}
	\label{fig13}
\end{figure}

\begin{figure}[tb]
	\setlength{\abovecaptionskip}{-3mm}
	\setlength{\belowcaptionskip}{-5mm}
	\includegraphics[width=1\textwidth]{./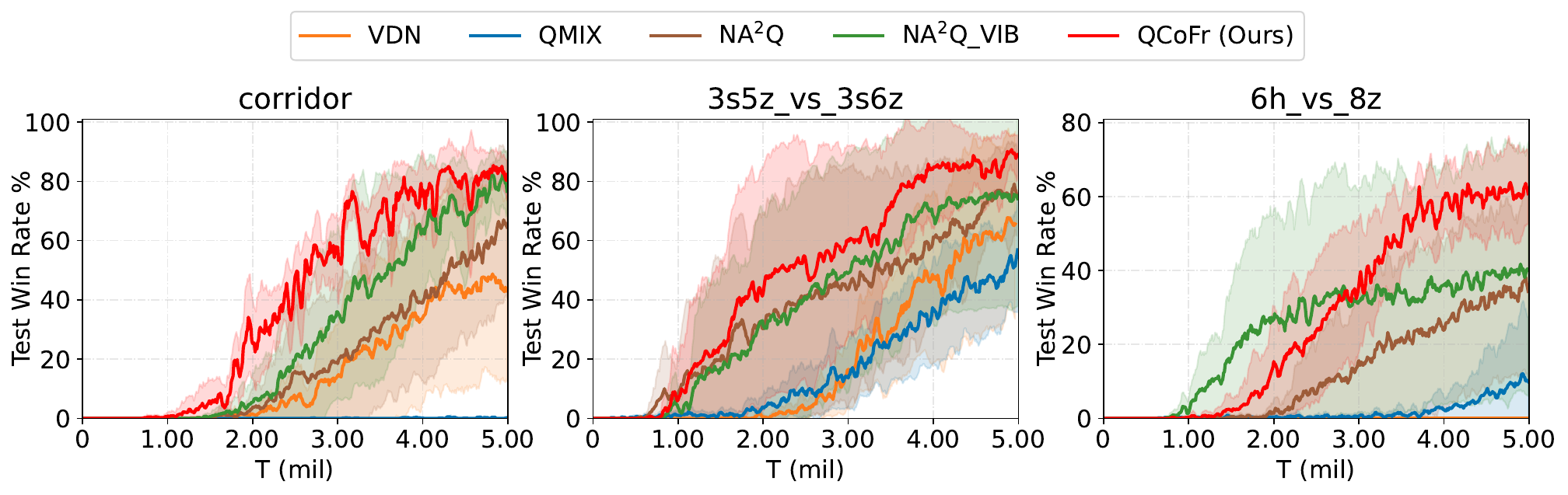}
	\caption{Performance comparison of NA$^2$Q with the VIB module and our method.}
	\label{fig14}
\end{figure}

\begin{figure}[tb]
	\setlength{\abovecaptionskip}{-3mm}
	\setlength{\belowcaptionskip}{-5mm}
	\includegraphics[width=1\textwidth]{./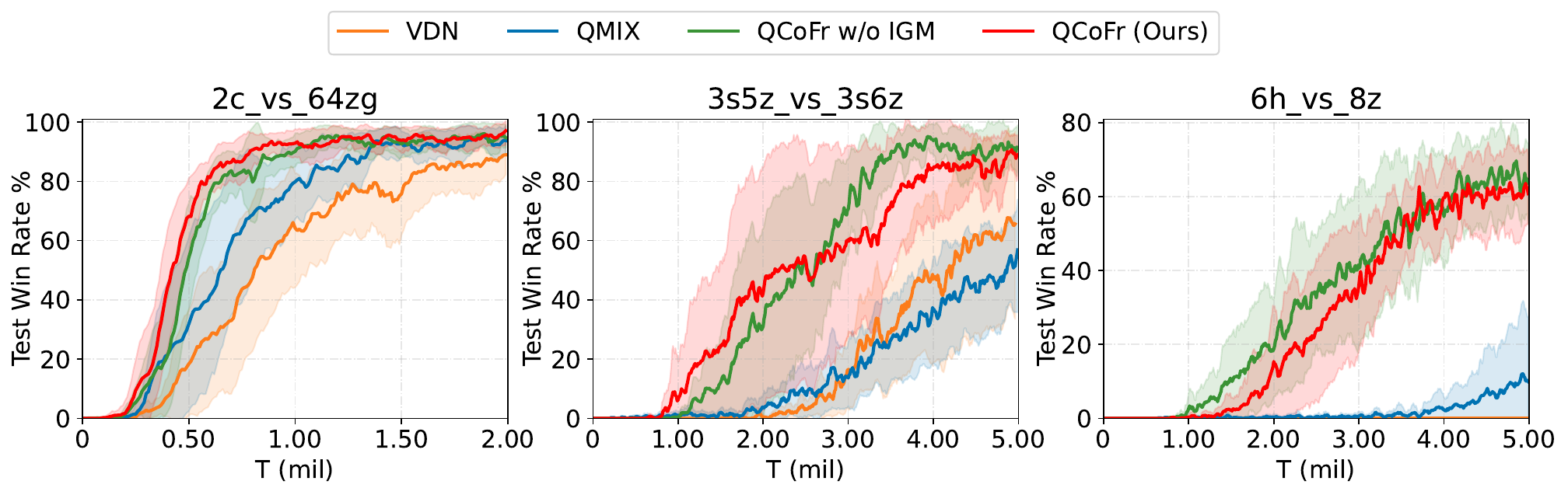}
	\caption{Performance comparison of QCoFr with and without IGM constraint.}
	\label{fig15}
\end{figure}

\textbf{The Role of CFN.}
Since NA$^2$Q struggles to model higher-order interactions, we equip it with the same VIB module and evaluate on three super-hard SMAC maps, comparing against QCoFr (\textit{Fig.~\ref{fig14}}).
This isolates the effect of interaction modeling from that of assistive information.
While NA$^2$Q+VIB outperforms the original NA$^2$Q, a clear gap remains to QCoFr.
The results indicate that, on complex tasks, explicitly modeling higher-order dependencies enables more refined cooperative strategies, which highlights the effectiveness of the CFN module beyond what VIB alone can provide.

\subsection{Discussion on the IGM Constraint}\label{IGM}
To isolate the effect of the continued-fraction mixing paradigm from non-monotonic joint-action search, we enforce the Individual-Global-Max (IGM) constraint in our framework.
Notably, the universal approximation theorem applies to any linear combination of continued fractions and does not require non-negative weights~\cite{puri2021cofrnets}, suggesting that the approach can be extended to non-IGM mixers. 
Integrating CFN with fully IGM-free methods such as DAVE~\cite{xu2023dual} is therefore a natural direction.
DAVE emphasizes that most value decomposition methods operate under IGM, which couples the optimal joint action with the optimal individual actions.
Relaxing this constraint requires agents to explicitly search for the globally optimal joint action at execution time, often via an auxiliary network.
To probe this possibility, we conduct experiments under relaxed IGM assumptions on three SMAC scenarios.
As shown in \textit{Fig.~\ref{fig15}}, QCoFr achieves comparable or even slightly improved performance without IGM, indicating that our architecture can still recover high-quality joint actions.

\newpage
\section*{NeurIPS Paper Checklist}

The checklist is designed to encourage best practices for responsible machine learning research, addressing issues of reproducibility, transparency, research ethics, and societal impact. Do not remove the checklist: {\bf The papers not including the checklist will be desk rejected.} The checklist should follow the references and follow the (optional) supplemental material.  The checklist does NOT count towards the page
limit. 

Please read the checklist guidelines carefully for information on how to answer these questions. For each question in the checklist:
\begin{itemize}
    \item You should answer \answerYes{}, \answerNo{}, or \answerNA{}.
    \item \answerNA{} means either that the question is Not Applicable for that particular paper or the relevant information is Not Available.
    \item Please provide a short (1-2 sentence) justification right after your answer (even for NA). 
\end{itemize}

{\bf The checklist answers are an integral part of your paper submission.} They are visible to the reviewers, area chairs, senior area chairs, and ethics reviewers. You will be asked to also include it (after eventual revisions) with the final version of your paper, and its final version will be published with the paper.

The reviewers of your paper will be asked to use the checklist as one of the factors in their evaluation. While "\answerYes{}" is generally preferable to "\answerNo{}", it is perfectly acceptable to answer "\answerNo{}" provided a proper justification is given (e.g., "error bars are not reported because it would be too computationally expensive" or "we were unable to find the license for the dataset we used"). In general, answering "\answerNo{}" or "\answerNA{}" is not grounds for rejection. While the questions are phrased in a binary way, we acknowledge that the true answer is often more nuanced, so please just use your best judgment and write a justification to elaborate. All supporting evidence can appear either in the main paper or the supplemental material, provided in appendix. If you answer \answerYes{} to a question, in the justification please point to the section(s) where related material for the question can be found.

IMPORTANT, please:
\begin{itemize}
    \item {\bf Delete this instruction block, but keep the section heading ``NeurIPS Paper Checklist"},
    \item  {\bf Keep the checklist subsection headings, questions/answers and guidelines below.}
    \item {\bf Do not modify the questions and only use the provided macros for your answers}.
\end{itemize}


\begin{enumerate}

\item {\bf Claims}
    \item[] Question: Do the main claims made in the abstract and introduction accurately reflect the paper's contributions and scope?
    \item[] Answer: \answerYes{} 
    \item[] Justification: The main claims made in the abstract and introduction accurately reflect the paper's contributions and scope. The method can be found in Section ~\ref{CFN} and ~\ref{Methods}.  Experimental results are illustrated in Section~\ref{Experiments}. The detailed proofs and experimental settings can be found in the Appendix.
    \item[] Guidelines:
    \begin{itemize}
        \item The answer NA means that the abstract and introduction do not include the claims made in the paper.
        \item The abstract and/or introduction should clearly state the claims made, including the contributions made in the paper and important assumptions and limitations. A No or NA answer to this question will not be perceived well by the reviewers. 
        \item The claims made should match theoretical and experimental results, and reflect how much the results can be expected to generalize to other settings. 
        \item It is fine to include aspirational goals as motivation as long as it is clear that these goals are not attained by the paper. 
    \end{itemize}

\item {\bf Limitations}
    \item[] Question: Does the paper discuss the limitations of the work performed by the authors?
    \item[] Answer: \answerYes{} 
    \item[] Justification: We include the limitations in our conclusion. We leave this part as our future work.
    \item[] Guidelines:
    \begin{itemize}
        \item The answer NA means that the paper has no limitation while the answer No means that the paper has limitations, but those are not discussed in the paper. 
        \item The authors are encouraged to create a separate "Limitations" section in their paper.
        \item The paper should point out any strong assumptions and how robust the results are to violations of these assumptions (e.g., independence assumptions, noiseless settings, model well-specification, asymptotic approximations only holding locally). The authors should reflect on how these assumptions might be violated in practice and what the implications would be.
        \item The authors should reflect on the scope of the claims made, e.g., if the approach was only tested on a few datasets or with a few runs. In general, empirical results often depend on implicit assumptions, which should be articulated.
        \item The authors should reflect on the factors that influence the performance of the approach. For example, a facial recognition algorithm may perform poorly when image resolution is low or images are taken in low lighting. Or a speech-to-text system might not be used reliably to provide closed captions for online lectures because it fails to handle technical jargon.
        \item The authors should discuss the computational efficiency of the proposed algorithms and how they scale with dataset size.
        \item If applicable, the authors should discuss possible limitations of their approach to address problems of privacy and fairness.
        \item While the authors might fear that complete honesty about limitations might be used by reviewers as grounds for rejection, a worse outcome might be that reviewers discover limitations that aren't acknowledged in the paper. The authors should use their best judgment and recognize that individual actions in favor of transparency play an important role in developing norms that preserve the integrity of the community. Reviewers will be specifically instructed to not penalize honesty concerning limitations.
    \end{itemize}

\item {\bf Theory assumptions and proofs}
    \item[] Question: For each theoretical result, does the paper provide the full set of assumptions and a complete (and correct) proof?
    \item[] Answer: \answerYes{} 
    \item[] Justification: We have provided the full set of assumptions and corresponding complete proofs under each proposed Theorem with detailed derivations included in the Appendix~\ref{Proofs}.
    \item[] Guidelines:
    \begin{itemize}
        \item The answer NA means that the paper does not include theoretical results. 
        \item All the theorems, formulas, and proofs in the paper should be numbered and cross-referenced.
        \item All assumptions should be clearly stated or referenced in the statement of any theorems.
        \item The proofs can either appear in the main paper or the supplemental material, but if they appear in the supplemental material, the authors are encouraged to provide a short proof sketch to provide intuition. 
        \item Inversely, any informal proof provided in the core of the paper should be complemented by formal proofs provided in appendix or supplemental material.
        \item Theorems and Lemmas that the proof relies upon should be properly referenced. 
    \end{itemize}

    \item {\bf Experimental result reproducibility}
    \item[] Question: Does the paper fully disclose all the information needed to reproduce the main experimental results of the paper to the extent that it affects the main claims and/or conclusions of the paper (regardless of whether the code and data are provided or not)?
    \item[] Answer: \answerYes{} 
    \item[] Justification: In addition to the description in the paper, we included more detailed hyperparameters settings in the Appendix.
    \item[] Guidelines:
    \begin{itemize}
        \item The answer NA means that the paper does not include experiments.
        \item If the paper includes experiments, a No answer to this question will not be perceived well by the reviewers: Making the paper reproducible is important, regardless of whether the code and data are provided or not.
        \item If the contribution is a dataset and/or model, the authors should describe the steps taken to make their results reproducible or verifiable. 
        \item Depending on the contribution, reproducibility can be accomplished in various ways. For example, if the contribution is a novel architecture, describing the architecture fully might suffice, or if the contribution is a specific model and empirical evaluation, it may be necessary to either make it possible for others to replicate the model with the same dataset, or provide access to the model. In general. releasing code and data is often one good way to accomplish this, but reproducibility can also be provided via detailed instructions for how to replicate the results, access to a hosted model (e.g., in the case of a large language model), releasing of a model checkpoint, or other means that are appropriate to the research performed.
        \item While NeurIPS does not require releasing code, the conference does require all submissions to provide some reasonable avenue for reproducibility, which may depend on the nature of the contribution. For example
        \begin{enumerate}
            \item If the contribution is primarily a new algorithm, the paper should make it clear how to reproduce that algorithm.
            \item If the contribution is primarily a new model architecture, the paper should describe the architecture clearly and fully.
            \item If the contribution is a new model (e.g., a large language model), then there should either be a way to access this model for reproducing the results or a way to reproduce the model (e.g., with an open-source dataset or instructions for how to construct the dataset).
            \item We recognize that reproducibility may be tricky in some cases, in which case authors are welcome to describe the particular way they provide for reproducibility. In the case of closed-source models, it may be that access to the model is limited in some way (e.g., to registered users), but it should be possible for other researchers to have some path to reproducing or verifying the results.
        \end{enumerate}
    \end{itemize}

\item {\bf Open access to data and code}
    \item[] Question: Does the paper provide open access to the data and code, with sufficient instructions to faithfully reproduce the main experimental results, as described in supplemental material?
    \item[] Answer: \answerNo{} 
    \item[] Justification: The model architecture and hyperparameter settings are all included in this work. We believe we have provided enough details. We will make the code available in the near future.
    \item[] Guidelines:
    \begin{itemize}
        \item The answer NA means that paper does not include experiments requiring code.
        \item Please see the NeurIPS code and data submission guidelines (\url{https://nips.cc/public/guides/CodeSubmissionPolicy}) for more details.
        \item While we encourage the release of code and data, we understand that this might not be possible, so ``No'' is an acceptable answer. Papers cannot be rejected simply for not including code, unless this is central to the contribution (e.g., for a new open-source benchmark).
        \item The instructions should contain the exact command and environment needed to run to reproduce the results. See the NeurIPS code and data submission guidelines (\url{https://nips.cc/public/guides/CodeSubmissionPolicy}) for more details.
        \item The authors should provide instructions on data access and preparation, including how to access the raw data, preprocessed data, intermediate data, and generated data, etc.
        \item The authors should provide scripts to reproduce all experimental results for the new proposed method and baselines. If only a subset of experiments are reproducible, they should state which ones are omitted from the script and why.
        \item At submission time, to preserve anonymity, the authors should release anonymized versions (if applicable).
        \item Providing as much information as possible in supplemental material (appended to the paper) is recommended, but including URLs to data and code is permitted.
    \end{itemize}

\item {\bf Experimental setting/details}
    \item[] Question: Does the paper specify all the training and test details (e.g., data splits, hyperparameters, how they were chosen, type of optimizer, etc.) necessary to understand the results?
    \item[] Answer: \answerYes{} 
    \item[] Justification: The details are included in Appendix.
    \item[] Guidelines:
    \begin{itemize}
        \item The answer NA means that the paper does not include experiments.
        \item The experimental setting should be presented in the core of the paper to a level of detail that is necessary to appreciate the results and make sense of them.
        \item The full details can be provided either with the code, in appendix, or as supplemental material.
    \end{itemize}

\item {\bf Experiment statistical significance}
    \item[] Question: Does the paper report error bars suitably and correctly defined or other appropriate information about the statistical significance of the experiments?
    \item[] Answer: \answerYes{} 
    \item[] Justification: For every experiment introduced, we run with multiple random seeds and reported both mean and std averaged over these multiple random seeds.
    \item[] Guidelines:
    \begin{itemize}
        \item The answer NA means that the paper does not include experiments.
        \item The authors should answer "Yes" if the results are accompanied by error bars, confidence intervals, or statistical significance tests, at least for the experiments that support the main claims of the paper.
        \item The factors of variability that the error bars are capturing should be clearly stated (for example, train/test split, initialization, random drawing of some parameter, or overall run with given experimental conditions).
        \item The method for calculating the error bars should be explained (closed form formula, call to a library function, bootstrap, etc.)
        \item The assumptions made should be given (e.g., Normally distributed errors).
        \item It should be clear whether the error bar is the standard deviation or the standard error of the mean.
        \item It is OK to report 1-sigma error bars, but one should state it. The authors should preferably report a 2-sigma error bar than state that they have a 96\% CI, if the hypothesis of Normality of errors is not verified.
        \item For asymmetric distributions, the authors should be careful not to show in tables or figures symmetric error bars that would yield results that are out of range (e.g. negative error rates).
        \item If error bars are reported in tables or plots, The authors should explain in the text how they were calculated and reference the corresponding figures or tables in the text.
    \end{itemize}

\item {\bf Experiments compute resources}
    \item[] Question: For each experiment, does the paper provide sufficient information on the computer resources (type of compute workers, memory, time of execution) needed to reproduce the experiments?
    \item[] Answer: \answerYes{} 
    \item[] Justification: As mentioned in Appendix, our model runs on an NVIDIA RTX 3080TI GPU and an Intel i9-12900k CPU.
    \item[] Guidelines:
    \begin{itemize}
        \item The answer NA means that the paper does not include experiments.
        \item The paper should indicate the type of compute workers CPU or GPU, internal cluster, or cloud provider, including relevant memory and storage.
        \item The paper should provide the amount of compute required for each of the individual experimental runs as well as estimate the total compute. 
        \item The paper should disclose whether the full research project required more compute than the experiments reported in the paper (e.g., preliminary or failed experiments that didn't make it into the paper). 
    \end{itemize}
    
\item {\bf Code of ethics}
    \item[] Question: Does the research conducted in the paper conform, in every respect, with the NeurIPS Code of Ethics \url{https://neurips.cc/public/EthicsGuidelines}?
    \item[] Answer: \answerYes{} 
    \item[] Justification: We are strictly following NeurIPS Code of Ethics.
    \item[] Guidelines:
    \begin{itemize}
        \item The answer NA means that the authors have not reviewed the NeurIPS Code of Ethics.
        \item If the authors answer No, they should explain the special circumstances that require a deviation from the Code of Ethics.
        \item The authors should make sure to preserve anonymity (e.g., if there is a special consideration due to laws or regulations in their jurisdiction).
    \end{itemize}

\item {\bf Broader impacts}
    \item[] Question: Does the paper discuss both potential positive societal impacts and negative societal impacts of the work performed?
    \item[] Answer: \answerNA{} 
    \item[] Justification: Our paper proposes a novel method for MARL, and we do not think there would be any direct social impact of it.
    \item[] Guidelines:
    \begin{itemize}
        \item The answer NA means that there is no societal impact of the work performed.
        \item If the authors answer NA or No, they should explain why their work has no societal impact or why the paper does not address societal impact.
        \item Examples of negative societal impacts include potential malicious or unintended uses (e.g., disinformation, generating fake profiles, surveillance), fairness considerations (e.g., deployment of technologies that could make decisions that unfairly impact specific groups), privacy considerations, and security considerations.
        \item The conference expects that many papers will be foundational research and not tied to particular applications, let alone deployments. However, if there is a direct path to any negative applications, the authors should point it out. For example, it is legitimate to point out that an improvement in the quality of generative models could be used to generate deepfakes for disinformation. On the other hand, it is not needed to point out that a generic algorithm for optimizing neural networks could enable people to train models that generate Deepfakes faster.
        \item The authors should consider possible harms that could arise when the technology is being used as intended and functioning correctly, harms that could arise when the technology is being used as intended but gives incorrect results, and harms following from (intentional or unintentional) misuse of the technology.
        \item If there are negative societal impacts, the authors could also discuss possible mitigation strategies (e.g., gated release of models, providing defenses in addition to attacks, mechanisms for monitoring misuse, mechanisms to monitor how a system learns from feedback over time, improving the efficiency and accessibility of ML).
    \end{itemize}
    
\item {\bf Safeguards}
    \item[] Question: Does the paper describe safeguards that have been put in place for responsible release of data or models that have a high risk for misuse (e.g., pretrained language models, image generators, or scraped datasets)?
    \item[] Answer: \answerNA{} 
    \item[] Justification: The paper poses no such risks.
    \item[] Guidelines:
    \begin{itemize}
        \item The answer NA means that the paper poses no such risks.
        \item Released models that have a high risk for misuse or dual-use should be released with necessary safeguards to allow for controlled use of the model, for example by requiring that users adhere to usage guidelines or restrictions to access the model or implementing safety filters. 
        \item Datasets that have been scraped from the Internet could pose safety risks. The authors should describe how they avoided releasing unsafe images.
        \item We recognize that providing effective safeguards is challenging, and many papers do not require this, but we encourage authors to take this into account and make a best faith effort.
    \end{itemize}

\item {\bf Licenses for existing assets}
    \item[] Question: Are the creators or original owners of assets (e.g., code, data, models), used in the paper, properly credited and are the license and terms of use explicitly mentioned and properly respected?
    \item[] Answer: \answerYes{} 
    \item[] Justification: We cited all the creators or original owners of assets mentioned in the paper.
    \item[] Guidelines: 
    \begin{itemize}
        \item The answer NA means that the paper does not use existing assets.
        \item The authors should cite the original paper that produced the code package or dataset.
        \item The authors should state which version of the asset is used and, if possible, include a URL.
        \item The name of the license (e.g., CC-BY 4.0) should be included for each asset.
        \item For scraped data from a particular source (e.g., website), the copyright and terms of service of that source should be provided.
        \item If assets are released, the license, copyright information, and terms of use in the package should be provided. For popular datasets, \url{paperswithcode.com/datasets} has curated licenses for some datasets. Their licensing guide can help determine the license of a dataset.
        \item For existing datasets that are re-packaged, both the original license and the license of the derived asset (if it has changed) should be provided.
        \item If this information is not available online, the authors are encouraged to reach out to the asset's creators.
    \end{itemize}

\item {\bf New assets}
    \item[] Question: Are new assets introduced in the paper well documented and is the documentation provided alongside the assets?
    \item[] Answer: \answerNA{} 
    \item[] Justification: The paper does not release new assets.
    \item[] Guidelines:
    \begin{itemize}
        \item The answer NA means that the paper does not release new assets.
        \item Researchers should communicate the details of the dataset/code/model as part of their submissions via structured templates. This includes details about training, license, limitations, etc. 
        \item The paper should discuss whether and how consent was obtained from people whose asset is used.
        \item At submission time, remember to anonymize your assets (if applicable). You can either create an anonymized URL or include an anonymized zip file.
    \end{itemize}

\item {\bf Crowdsourcing and research with human subjects}
    \item[] Question: For crowdsourcing experiments and research with human subjects, does the paper include the full text of instructions given to participants and screenshots, if applicable, as well as details about compensation (if any)? 
    \item[] Answer: \answerNA{} 
    \item[] Justification: The paper does not involve crowdsourcing nor research with human subjects.
    \item[] Guidelines:
    \begin{itemize}
        \item The answer NA means that the paper does not involve crowdsourcing nor research with human subjects.
        \item Including this information in the supplemental material is fine, but if the main contribution of the paper involves human subjects, then as much detail as possible should be included in the main paper. 
        \item According to the NeurIPS Code of Ethics, workers involved in data collection, curation, or other labor should be paid at least the minimum wage in the country of the data collector. 
    \end{itemize}

\item {\bf Institutional review board (IRB) approvals or equivalent for research with human subjects}
    \item[] Question: Does the paper describe potential risks incurred by study participants, whether such risks were disclosed to the subjects, and whether Institutional Review Board (IRB) approvals (or an equivalent approval/review based on the requirements of your country or institution) were obtained?
    \item[] Answer: \answerNA{} 
    \item[] Justification: The paper does not involve crowdsourcing nor research with human subjects.
    \item[] Guidelines:
    \begin{itemize}
        \item The answer NA means that the paper does not involve crowdsourcing nor research with human subjects.
        \item Depending on the country in which research is conducted, IRB approval (or equivalent) may be required for any human subjects research. If you obtained IRB approval, you should clearly state this in the paper. 
        \item We recognize that the procedures for this may vary significantly between institutions and locations, and we expect authors to adhere to the NeurIPS Code of Ethics and the guidelines for their institution. 
        \item For initial submissions, do not include any information that would break anonymity (if applicable), such as the institution conducting the review.
    \end{itemize}

\item {\bf Declaration of LLM usage}
    \item[] Question: Does the paper describe the usage of LLMs if it is an important, original, or non-standard component of the core methods in this research? Note that if the LLM is used only for writing, editing, or formatting purposes and does not impact the core methodology, scientific rigorousness, or originality of the research, declaration is not required.
    \item[] Answer: \answerNA{} 
    \item[] Justification: The core method development in this research does not involve LLMs as any important, original, or non-standard components.
    \item[] Guidelines:
    \begin{itemize}
        \item The answer NA means that the core method development in this research does not involve LLMs as any important, original, or non-standard components.
        \item Please refer to our LLM policy (\url{https://neurips.cc/Conferences/2025/LLM}) for what should or should not be described.
    \end{itemize}

\end{enumerate}

\end{document}